\newcommand*{\Reals}{\mathbb{R}}
\newcommand*{\Naturals}{\mathbb{N}}
\DeclarePairedDelimiterX\set[1]\lbrace\rbrace{\,#1\,}
\newcommand*{\LongState}[2]{\State\parbox[t]{\dimexpr\linewidth-\dimexpr\algorithmicindent*#1}{#2\strut}}  
\DeclareMathOperator{\Neut}{Neut}
\DeclareMathOperator*{\Agg}{agg}
\DeclareMathOperator*{\AggAvg}{avg}
\DeclareMathOperator*{\AggMin}{min}
\newcommand*{\MiscProb}[1]{\textsc{\small #1}}
\newcommand*{\NeutProb}[1]{\textsc{Neutrality}_{#1}}
\newcommand*{\Alg}[1]{\textsc{\small Approx#1}}
\newcommand{\stitle}[1]{\vspace{0.5mm}\noindent{\textbf{#1}}.}
\newcommand{\tightsection}[1]{\section{#1}}
\newcommand{\tightsubsection}[1]{\subsection{#1}}
\begin{document}

\title{Maximizing Neutrality in News Ordering}

\author{Rishi Advani}
\orcid{0000-0002-5522-0401}
\affiliation{%
  \institution{University of Illinois Chicago}
  \city{Chicago}
  \state{IL}
  \country{USA}
}
\email{radvani2@uic.edu}

\author{Paolo Papotti}
\orcid{0000-0003-0651-4128}
\affiliation{%
  \institution{EURECOM}
  \city{Biot}
  \country{France}
}
\email{papotti@eurecom.fr}

\author{Abolfazl Asudeh}
\orcid{0000-0002-5251-6186}
\affiliation{%
  \institution{University of Illinois Chicago}
  \city{Chicago}
  \state{IL}
  \country{USA}
}
\email{asudeh@uic.edu}

\begin{abstract}
The detection of fake news has received increasing attention over the past few years, but there are more subtle ways of deceiving one's audience. In addition to the content of news stories, their presentation can also be made misleading or biased.
In this work, we study the impact of the ordering of news stories on audience perception. We introduce the problems of detecting cherry-picked news orderings and maximizing neutrality in news orderings. We prove hardness results and present several algorithms for approximately solving these problems.
Furthermore, we provide extensive experimental results and present evidence of potential cherry-picking in the real world.
\end{abstract}

\begin{CCSXML}
<ccs2012>
   <concept>
       <concept_id>10003752.10003777.10003779</concept_id>
       <concept_desc>Theory of computation~Problems, reductions and completeness</concept_desc>
       <concept_significance>300</concept_significance>
       </concept>
   <concept>
       <concept_id>10003752.10003809.10003636</concept_id>
       <concept_desc>Theory of computation~Approximation algorithms analysis</concept_desc>
       <concept_significance>500</concept_significance>
       </concept>
   <concept>
       <concept_id>10003752.10010070</concept_id>
       <concept_desc>Theory of computation~Theory and algorithms for application domains</concept_desc>
       <concept_significance>500</concept_significance>
       </concept>
   <concept>
       <concept_id>10002950.10003624.10003633.10010917</concept_id>
       <concept_desc>Mathematics of computing~Graph algorithms</concept_desc>
       <concept_significance>500</concept_significance>
       </concept>
   <concept>
       <concept_id>10002950.10003624.10003633.10003642</concept_id>
       <concept_desc>Mathematics of computing~Matchings and factors</concept_desc>
       <concept_significance>500</concept_significance>
       </concept>
   <concept>
       <concept_id>10002950.10003624.10003633.10003640</concept_id>
       <concept_desc>Mathematics of computing~Paths and connectivity problems</concept_desc>
       <concept_significance>500</concept_significance>
       </concept>
   <concept>
       <concept_id>10002950.10003624.10003633.10003638</concept_id>
       <concept_desc>Mathematics of computing~Random graphs</concept_desc>
       <concept_significance>300</concept_significance>
       </concept>
   <concept>
       <concept_id>10002951.10003260.10003282.10003292</concept_id>
       <concept_desc>Information systems~Social networks</concept_desc>
       <concept_significance>300</concept_significance>
       </concept>
   <concept>
       <concept_id>10002951.10003260.10003261.10003267</concept_id>
       <concept_desc>Information systems~Content ranking</concept_desc>
       <concept_significance>500</concept_significance>
       </concept>
</ccs2012>
\end{CCSXML}

\ccsdesc[300]{Theory of computation~Problems, reductions and completeness}
\ccsdesc[500]{Theory of computation~Approximation algorithms analysis}
\ccsdesc[500]{Theory of computation~Theory and algorithms for application domains}
\ccsdesc[500]{Mathematics of computing~Graph algorithms}
\ccsdesc[500]{Mathematics of computing~Matchings and factors}
\ccsdesc[500]{Mathematics of computing~Paths and connectivity problems}
\ccsdesc[300]{Information systems~Social networks}
\ccsdesc[500]{Information systems~Content ranking}

\keywords{media bias; news ordering; neutrality; cherry-picking}



\maketitle

\tightsection{Introduction}\label{sec:intro}
Access to information is a hallmark of modern democracy and society. Many people rely on online news sources or social media to understand the problems facing their communities, stay informed on current events, and determine who they would like to represent them in government. As such, they often have to blindly trust that their news sources are providing them with accurate information and presenting it in an unbiased way. Media organizations can take advantage of this trust to push their own agendas and spread disinformation when it benefits them financially or politically. As people become more aware of the prevalence of misinformation online, news sources risk losing their credibility if they are caught spreading outright lies. But even if the information they provide is technically accurate, there are still ways in which they can inject bias into its presentation~\citep{hamborg2019automated}.

One way in which this can be done is through deceptive ordering of news stories in a broadcast or web page. For example, suppose two headlines are placed next to each other in a user's feed:
\begin{itemize}[leftmargin=*]
    \item ``Immigration rates are on the rise again''
    \item ``Crime rates in major cities have reached historic highs''
\end{itemize}

Viewing one headline may influence the user's opinion of the story corresponding to the second headline --- by affecting their belief in the veracity of the story, their stance (for or against) on the events in the story, or by inducing them to perceive a causal relationship when there is only correlation. We term this phenomenon ``opinion priming''. Viewing one headline primes\footnotemark{} the user to form a certain opinion when shown the second headline.

\footnotetext{The occurrence of priming has been extensively studied in related settings~\citep{Baumgartner12, Agadjanian22, Simonsohn13, Draws21, Damessie18}. We performed a user study (\S\ref{sec:exp:userstudy}) to confirm the existence of priming in our setting.}

In this particular case, the user may perceive a causal relationship between these two events, even though it is never explicitly stated by the news source. In reality, this correlation could be completely spurious, but a news organization with ulterior motives could use this psychological trick by placing the two stories next to each other to influence the views of its audience.

On the other hand, a socially responsible news corporation, or an organization auditing a less scrupulous corporation to hold them accountable, may seek to order news stories in a way that minimizes this risk of opinion priming. Alternatively stated, they may seek to \emph{maximize the neutrality of a news ordering}.

Acknowledging that there are other objectives at play as well, including profitability for the news corporation and relevance to the user, in this paper, we focus on maximizing neutrality and leave simultaneous optimization of all these objectives as an important direction for future work.

\stitle{Problem Novelty}
In this paper, we study news ordering neutrality\footnotemark{} from an algorithmic perspective.
While there has been extensive work in recent years on different aspects of news coverage selection bias~\citep{hocke1996determining, bourgeois2018selection, kozman2022selection, lazaridou2016identifying}, diversifying news recommendations~\citep{kitchens2020understanding, gharahighehi2021diversification, lunardi2019representing, bauer2021diversification}, and computational fact-checking~\citep{NakovCHAEBPSM21, guo2022survey, zhou2020survey}, to the best of our knowledge, this paper is \emph{the first to consider the impact of the ordering of news stories on neutrality}.

\footnotetext{This paper addresses one important technical piece of a larger socio-technical problem with many dimensions~\citep{hamborg2019automated}. We discuss related work in more detail in \S\ref{sec:related}.}

\stitle{Contributions}
Our contributions in this paper are as follows:
\begin{itemize}[leftmargin=*]
    \item We formalize the notion of news ordering and introduce the problems of (a) detecting cherry-picked news orderings and (b) maximizing neutrality in news orderings (\S\ref{sec:pre}).
    \item We present an algorithm to efficiently detect cherry-picked news orderings (\S\ref{sec:detection}). The algorithm uses random shuffling and tail inequalities to detect if the neutrality of the given ordering is significantly different from the mean.
    \item We study the problem of maximizing neutrality in news orderings. We prove results on the theoretical hardness of solving this problem and provide several approximation algorithms (\S\ref{sec:aggsum} and \S\ref{sec:aggmin}).
    Our algorithms make (non-trivial) connections to other problems such as max-weight matching~\citep{edmonds_1965} and max-weight cycle cover~\citep{blaser2005approximating}, by using them as subroutines in the algorithms.
    \item We introduce new variations of the fundamental maximum traveling salesman problem and propose algorithms that can be used to solve problems with a broad range of applications.
    In particular, we define the \MiscProb{PathMaxTSP} problem of finding a Hamiltonian path with maximum total weight in a graph.
    \item We conduct comprehensive experiments on real and synthetic datasets to validate our theoretical results (\S\ref{sec:experiments}).
    We were able to find \emph{potential evidence of cherry-picked orderings in the real world}, further motivating our study.
    In addition, our user study with over 50 participants confirms the existence of priming in our setting.
\end{itemize}
We conclude the paper with
a discussion of related work (\S\ref{sec:related}) and
directions for future research (\S\ref{sec:discussion}).
\tightsection{Problem Setup}\label{sec:pre}
Let $\mathbf{t}=\set{t_1,t_2,\dots, t_n}$ be a set of $n$ news stories to be presented by a news source.
Let $\mathbf{s}=\set{s_1,s_2,\dots, s_n}$ be a permutation of the integers from 1 to $n$ representing an ordering of those news stories: news story $t_i$ is presented in position $s_i$.

When news headlines are placed near each other, the user's opinion of one may be influenced by the other.
Our objective is twofold: we aim to detect when a news source has cherry-picked the ordering of its news stories, and to find the ordering that minimizes this risk of opinion priming.

To model this, we define a \emph{pairwise opinion priming (POP) function}  $C\colon \mathbf{t} \times \mathbf{t} \to \Reals$ that takes as input a pair\footnotemark{} of stories $(t_i, t_j)$, where $t_i\neq t_j$, and returns a real number in the range $[0,1]$.
An output of 1 indicates certainty that opinion priming will occur between two stories if they are in adjacent slots. 
An output of 0 indicates that no opinion priming will occur.
Note that the likelihood of opinion priming occurring for a particular individual is impacted by their own beliefs and mentality and may differ from that of another individual. Thus, we consider the incidence of opinion priming over a group of individuals.
More precisely, the function $C$ reflects the \emph{average} pairwise opinion priming over the audience.

\footnotetext{We could instead use \emph{ordered} pairs if we wished to model the POP function as being affected by the order of the two stories, and nearly all the results in this paper would still hold. More details can be found in the appendix.}

The values of $C$ can be determined in several ways. 
For example, a real audience's perception can be surveyed, an auditing agency can crowdsource answers to questions on opinion priming between pairs of news stories, or a domain expert can assign values based on their own judgment.
We use crowdsourcing in our user studies to estimate the values of $C$, confirming this method's feasibility in practice.
In this paper, we assume the values of $C$ are given as input. Thus, the problems and solutions proposed in this paper are agnostic to the choice of technique for determining $C$.

We also consider the distance between two stories in an ordering. As the distance increases, any opinion priming between the pair of stories will diminish accordingly; the audience will not form as strong an association if the stories are presented far apart from each other.
We define a \emph{decay function} $D \colon \Naturals \to \Reals$ that takes as input the distance between two distinct time slots and returns a real number in the range $[0,1]$ with $D(1)=1$ and $D$ monotonic.

Using the POP and decay functions, we can now define the pairwise neutrality of a pair of news stories.
\begin{definition}[Pairwise Neutrality] \label{def:pb}
Given a set of news stories $\mathbf{t}$, an ordering $\mathbf{s}$, a POP function $C$, and a decay function $D$, the \emph{pairwise neutrality} between distinct news stories $t_i$ and $t_j$ is defined as
$N_{i,j} =  1 - D(|s_j - s_i|) \cdot C(t_i,t_j)$.
\end{definition}

We now give an example to illustrate the concepts discussed so far. Suppose we have the following decay function.
\begin{equation}\label{eq:decay}
   D(d) =
   \begin{cases}
       1 & \text{if } d=1 \\
       0 & \text{otherwise}
   \end{cases}  
\end{equation}
This function treats pairs of headlines as having no risk of opinion priming if they are more than one position away from each other.

\begin{example}\label{ex-1}
Consider a set of news stories $\mathbf{t}=\set{t_1,t_2,t_3,t_4}$ with the following POP function $C$.
\[
    \begin{array}{|c|c|c|c|c|}
         \hline
         C & t_1 & t_2 & t_3 \\
         \hline
         t_2 & 0.1 & & \\
         \hline
         t_3 & 0.3 & 0.7 & \\
         \hline
         t_4 & 0.2 & 0.8 & 1 \\
         \hline
    \end{array}
\]
If we order the stories in $\mathbf{t}$ as $t_1, t_3, t_4, t_2$, then we have the following values for $\mathbf{s}$.
\vspace{-2mm}
\[
    \begin{array}{|c|c|c|c|c|}
        \hline
        s_1 & s_2 & s_3 & s_4 \\
        \hline
        1 & 4 & 2 & 3 \\
        \hline
    \end{array}
\]
For example, $s_3=2$ because $t_3$ is placed second in the ordering.

Using \autoref{eq:decay} for the decay function, the pairwise neutrality between $t_1$ and $t_2$, for example, is
\[
N_{1,2} = 1 - D(|s_2 - s_1|) \cdot C(t_1,t_2) = 1 - 0 \times 0.1 = 1 \,.
\]
The pairwise neutrality for all pairs of news stories is given below.
\[
    \begin{array}{|c|c|c|c|}
         \hline
         N & t_1 & t_2 & t_3 \\
         \hline
         t_2 & 1 & & \\
         \hline
         t_3 & 0.7 & 1 & \\
         \hline
         t_4 & 1 & 0.2 & 0 \\
         \hline
    \end{array}
\]
\end{example}

Using the notion of pairwise neutrality, we can now define neutrality for a whole news ordering. At a high-level, a news ordering is neutral if the pairwise neutrality between all pairs of news stories is ``high''.
More formally, we use Definition~\ref{def:n} to quantify neutrality in a news ordering.
For our purposes, an \emph{aggregation function} is any function that takes a set as input and returns a single real number in $[0,1]$ as output.

\begin{definition}[News Ordering Neutrality] \label{def:n}
Given a set of news stories $\mathbf{t}$, a POP function $C$, a decay function $D$, and an aggregation function $\Agg$, the neutrality of a news ordering $\mathbf{s}$ is defined as
\vspace{-2mm}
\begin{equation*}
    \Neut_{\Agg}(\mathbf{s}) = \Agg_{1\leq i<j\leq n} N_{i,j} \,,
\end{equation*}
where $N_{i,j}$ is the pairwise neutrality between $t_i$ and $t_j$.
\end{definition}
Analogously, for any aggregation function $\Agg$, we will denote the optimization problem of finding the ordering $\mathbf{s}$ that maximizes $\Neut_{\Agg}$ by $\NeutProb{\Agg}$.

We now define two aggregation functions that we will use throughout the paper.
\begin{definition}[Conditional Average Aggregation]
    Given the pairwise neutrality values $N_{i,j}$ for a set of news stories, an ordering $\mathbf{s}$, and a decay function $D$, the conditional average is defined as the average of the pairwise neutrality values over the support of $D$. I.e., if $D^+$ is the set of pairs $(i,j)$ where $D(|s_j - s_i|) > 0$, then the conditional average is the average of the neutrality values $N_{i,j}$ over all $(i,j) \in D^+$. If $D > 0$ for all inputs, then this is just a simple average. For brevity, we will refer to this function by ``$\AggAvg$''.
\end{definition}
\begin{definition}[Minimum Aggregation]
    The minimum aggregation function simply returns the minimum element in a set. We will refer to this function by ``$\AggMin$''.
\end{definition}

\begin{example}
Consider the same set of news stories $\mathbf{t}$, ordering $\mathbf{s}$, POP function $C$, and decay function $D$ from Example~\ref{ex-1}.
Using $\AggAvg$ as the aggregation function, we have
\[
\Neut_{\AggAvg}(\mathbf{s})
= (0.7 + 0.2 + 0)/3 = 0.3 \,.
\]
Similarly, the neutrality of $\mathbf{s}$ under $\AggMin$ aggregation is
\[
\Neut_{\AggMin}(\mathbf{s})
= \min N_{i,j}
= 0 \,.
\]
\end{example}

\begin{table}[pt]
\caption{Table of Notations}
\vspace{-3mm}
\small
\begin{tabularx}{\linewidth}{cX}
    \toprule
    Notation & Description \\
    \midrule
    $n$ & The cardinality of the set $\mathbf{t}$ \\
    $t_i$ & A news story in the set $\mathbf{t}$ \\
    $s_i$ & The slot assigned to $t_i$ in the ordering $\mathbf{s}$ \\
    $C$ & The pairwise opinion priming function \\
    $D$ & The decay function \\
    $N_{i,j}$ & The pairwise neutrality between $t_i$ and $t_j$ \\
    $\Neut_{\Agg}(\mathbf{s})$ & The neutrality of the ordering $\mathbf{s}$ under the aggregation function ``$\Agg$'' \\
    \bottomrule
\end{tabularx}
\label{tab:notations}
\end{table}

Having defined the notion of neutrality in news ordering, we will begin by studying how to detect cherry-picked news orderings in \S\ref{sec:detection}. Our main objective in this paper is to find news orderings that maximize neutrality, which we shall do in \S\ref{sec:aggsum} and \S\ref{sec:aggmin}. While the techniques proposed in \S\ref{sec:detection} are agnostic to the choice of decay function, in \S\ref{sec:aggsum} and \S\ref{sec:aggmin}, we will restrict ourselves to the decay function given in \autoref{eq:decay}. This allows us to model the problem using the language of graph theory. Analyzing more complex decay functions is an important direction for future work.

We define a graph representation of the problem as follows. For each news story $t_i$, we include a vertex $v_i$. For each pair of distinct stories $t_i$ and $t_j$, we include an edge between $v_i$ and $v_j$ with weight $N_{i,j}$.
For brevity, henceforth in this paper, assume all graphs are simple, complete, undirected, weighted, and have nonnegative edge weights unless otherwise specified. The requirement that the graphs are simple and complete is equivalent to stating that every pair of distinct vertices is joined by exactly one edge.





We define some graph theory terms that are used in the paper.
\begin{definition}[Hamiltonian cycle]
In a graph $G = (V,E)$, a \emph{Hamiltonian cycle} is a simple cycle that includes all vertices in $V$.
\end{definition}
\begin{definition}[Hamiltonian path]
In a graph $G = (V,E)$, a \emph{Hamiltonian path} is a simple path that includes all vertices in $V$.
\end{definition}
\begin{definition}[$\MiscProb{HamPath}$]
Given a graph $G$, \emph{$\MiscProb{HamPath}$} is the problem of determining if there exists a Hamiltonian path in $G$.
\end{definition}

With the restriction of the decay function to \autoref{eq:decay}, the problem of finding an ordering of news stories that maximizes $\Neut$ is equivalent to finding a Hamiltonian path that maximizes $\Neut$.
But first, in \S\ref{sec:detection} we propose our algorithm for detecting cherry-picking in an ordering (with any decay function).
\tightsection{Detecting Cherry-Picked Orderings}\label{sec:detection}
We begin by illustrating how to detect cherry-picked news orderings.
Suppose we have a set of news stories $\mathbf{t}$, a POP function $C$, a decay function $D$, and an aggregation function $\Agg$. Then, given a news ordering $\mathbf{s}$, we can deduce that it was likely cherry-picked if $\Neut_{\Agg}(\mathbf{s})$ differs significantly from the average neutrality over all possible orderings of $\mathbf{t}$.
If $\Neut_{\Agg}(\mathbf{s})$ is significantly lower than the average, then we have successfully detected bias in the ordering. On the other hand, if $\Neut_{\Agg}(\mathbf{s})$ is significantly higher than the average, then we can determine that the specified ordering was deliberately chosen in the interest of fairness.

If we knew the population mean and standard deviation, we could use Chebyshev's inequality to obtain an upper bound on the deviation from the mean. However, the number of possible orderings is combinatorially large ($n!$), so we cannot compute the neutrality for all of them. If we instead generate a sample of $r$ random orderings using Fisher-Yates shuffles~\citep{fisher1953statistical,10.1145/364520.364540}, we can use the Saw-Yang-Mo inequality~\citep{10.2307/2683249}, which only requires the \emph{sample} mean and standard deviation, to obtain an upper bound on deviation from the sample mean. For convenience, we use a simplified (and slightly looser) form of Kab{\'a}n's variant of the inequality~\citep{Kaban2012}:
\begin{equation}\label{eq:Kaban}
    \Pr\Biggl( \bigl\lvert X - \overline{Y} \bigr\rvert \geq \lambda \sigma \sqrt{\frac{r+1}{r}} \Biggr) \leq \frac{1}{\lambda^2} + \frac{1}{r} \,,
\end{equation}
where $\overline{Y}$ is the sample mean, $\sigma$ is the unbiased sample standard deviation,\!\footnotemark{} and the value for $\lambda$ is set such that the difference between the neutrality of the given ordering and $\overline{Y}$ is $\lambda \sigma \sqrt{(r+1)/r}$.

\footnotetext{Usually $\sigma$ is used for population deviation and $s$ for sample deviation, but we chose to avoid the use of $s$ to avoid confusion with our notation $\mathbf{s}$ for news orderings.}

\begin{example}
If we use, say, $r=50$ samples with $\lambda=5$, then using Equation~\ref{eq:Kaban}, we have the following.
\begin{equation*}\hspace{5mm}
    \Pr\Biggl( \bigl\lvert X - \overline{Y} \bigr\rvert \geq 5 \sigma \sqrt{\frac{51}{50}} \Biggr) \leq \frac{1}{5^2} + \frac{1}{50} = \frac{3}{50}
\end{equation*}
The probability that the neutrality of a truly random ordering is greater than $5 \sqrt{51/50} \approx 5.05$ sample standard deviations from the sample mean is less than $6\%$. Thus, if the neutrality of our given ordering is that far from the sample mean, it is highly likely that it was cherry-picked.
\end{example}

Users can select the value for parameter $r$ based on their problem size, aggregation function's complexity, access to computational resources, and error tolerance. We suggest $r=300$ as a reasonable starting point.


Now, we analyze the time complexity of the detection procedure.
We can compute $r$ random permutations in $O(rn)$ time using Fisher-Yates shuffles. We can compute the neutrality of the $r$ orderings in $O(rn^2)$ time. Computing the sample mean and standard deviation of the $r$ values takes $O(r)$ time and evaluating the test statistic takes $O(1)$ time.
Thus, overall, the algorithm takes $O(rn^2)$ time.

Furthermore, if we make certain assumptions, we can obtain a running time linear in $n$. If we use the decay function given by \autoref{eq:decay} along with an aggregation function that can be computed in linear time (e.g., $\AggAvg$ or $\AggMin$), we can compute the neutrality of the $r$ orderings in $O(rn)$ time for an overall running time of $O(rn)$.
\tightsection{Maximizing Neutrality under Average Aggregation}\label{sec:aggsum}
In the previous section, we considered the problem of detecting cherry-picked news orderings. Now, we move on to the main focus of our paper: finding news orderings with maximum neutrality.

Before beginning the technical content, we stop to emphasize the importance of computational approaches to this problem. Due to the combinatorially large number of possible orderings, the task is infeasible for a human with even a very small number of news headlines. For example, \emph{with \underline{\smash{only 10 headlines}}, there are 3,628,800 potential orderings to consider}. Thus, even in contexts where few stories are presented (e.g., a television broadcast), computational approaches are important. Furthermore, there are contexts in which the number of stories grows much larger (e.g., scrolling through a social media feed), where computational approaches are critical.

First, we consider the scenario where our aggregation function is the $\AggAvg$ function. This is a natural aggregation function to use; if we are equally invested in the pairwise neutrality of each pair of stories, it makes sense to maximize the average (mean) value. Note that this is exactly equivalent to maximizing the sum of the pairwise neutrality of each pair but with the added benefit that the neutrality will always be a value in the range $[0,1]$, so it is easier to make intuitive judgments about whether it is ``high'' or ``low''.

In the graph theory representation, the problem is now equivalent to finding a Hamiltonian path with maximum weight. To the best of our knowledge,
\emph{we are the first to study this problem.}
We will call this problem the ``path maximum traveling salesman problem'', or $\MiscProb{PathMaxTSP}$.
\begin{definition}[$\MiscProb{PathMaxTSP}$]
Given a graph $G$, \emph{$\MiscProb{PathMaxTSP}$} is the problem of finding a Hamiltonian path with maximum total weight.
\end{definition}

\begin{theorem}\label{thm:pathmaxtspnphard}
$\MiscProb{PathMaxTSP}$ is NP-hard.\footnotemark{}
\end{theorem}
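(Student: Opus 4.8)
The plan is to prove NP-hardness by reduction from the well-known NP-complete problem $\MiscProb{HamPath}$, which was defined earlier in the excerpt. The key observation is that $\MiscProb{PathMaxTSP}$ operates on complete weighted graphs, whereas $\MiscProb{HamPath}$ asks about the existence of a Hamiltonian path in an \emph{arbitrary} (possibly non-complete) graph. So the heart of the reduction is to encode the adjacency structure of the input graph into edge weights of a complete graph.

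First, I would take an arbitrary instance $G = (V, E)$ of $\MiscProb{HamPath}$ and construct a complete weighted graph $G'$ on the same vertex set $V$. The natural weighting is to assign weight $1$ to every edge that is present in $E$ and weight $0$ to every ``non-edge'' (pairs of vertices not joined in $G$), which become the filler edges needed to make $G'$ complete. This construction takes polynomial time, since $G'$ has at most $\binom{|V|}{2}$ edges and each weight is decided by a single adjacency lookup. The intuition is that a Hamiltonian path in $G'$ that uses only weight-$1$ edges corresponds exactly to a Hamiltonian path in $G$.

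Next I would argue the correctness of the reduction by relating the optimum of $\MiscProb{PathMaxTSP}$ on $G'$ to the existence of a Hamiltonian path in $G$. A Hamiltonian path on $n = |V|$ vertices uses exactly $n-1$ edges, so the maximum possible total weight of any Hamiltonian path in $G'$ is $n-1$, attained if and only if every edge on the path has weight $1$, i.e.\ corresponds to an actual edge of $G$. Therefore $G$ has a Hamiltonian path if and only if the maximum-weight Hamiltonian path in $G'$ has weight exactly $n-1$. Since a solution to $\MiscProb{PathMaxTSP}$ reports such a maximum-weight path, we can read off the answer to $\MiscProb{HamPath}$ in constant additional time by comparing the returned weight to $n-1$. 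This establishes that a polynomial-time algorithm for $\MiscProb{PathMaxTSP}$ would yield one for $\MiscProb{HamPath}$, proving the former is NP-hard.

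The main subtlety to handle carefully is the direction of the biconditional and the role of the filler edges: I must ensure that the added weight-$0$ edges cannot inflate the optimum beyond $n-1$ or create a spurious high-weight path, which follows immediately because weights are capped at $1$ and a path has exactly $n-1$ edges. I should also confirm the constructed graph $G'$ satisfies the standing assumptions stated in the excerpt (simple, complete, undirected, nonnegative weights), which it does by construction. I expect no genuine obstacle here; the only care needed is stating the reduction cleanly and verifying the equivalence in both directions.
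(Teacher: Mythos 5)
Your proposal is correct and is essentially the reduction the paper uses: the paper proves NP-hardness of the binary-weight special case $\MiscProb{PathMaxTSP(0,1)}$ by the same $\MiscProb{HamPath}$ encoding (weight $1$ for edges, weight $0$ for non-edges, optimum equals $n-1$ iff a Hamiltonian path exists) and then observes that this special case is an instance of $\MiscProb{PathMaxTSP}$. Your version simply folds the intermediate lemma into a single direct reduction.
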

\footnotetext{Proofs of all theorems stated in this section can be found in the appendix.}
\begin{corollary}\label{thm:avgnphard}
$\NeutProb{\AggAvg}$ is NP-hard.
\end{corollary}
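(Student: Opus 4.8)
The plan is to reduce from $\MiscProb{PathMaxTSP}$, which is NP-hard by Theorem~\ref{thm:pathmaxtspnphard}. The crucial observation is that under the decay function of \autoref{eq:decay}, the neutrality $\Neut_{\AggAvg}$ of an ordering $\mathbf{s}$ depends only on the pairs of stories placed in adjacent slots: a pair at distance $1$ contributes $N_{i,j} = 1 - C(t_i,t_j)$, while every pair at distance greater than $1$ lies outside the support $D^+$ and contributes nothing. Since an ordering of $n$ stories corresponds exactly to a Hamiltonian path in the graph representation (its $n-1$ consecutive pairs being the path's edges), and since $\AggAvg$ merely divides the sum of the adjacent pairwise neutralities by the fixed constant $n-1$, maximizing $\Neut_{\AggAvg}$ is precisely the task of finding a maximum-weight Hamiltonian path whose edge weights are $w_{i,j} = 1 - C(t_i,t_j)$.

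Given this equivalence, first I would take an arbitrary instance $G$ of $\MiscProb{PathMaxTSP}$ with nonnegative edge weights $w_{i,j}$ and build a $\NeutProb{\AggAvg}$ instance on $n$ stories $t_1,\dots,t_n$. The only thing to arrange is that the induced edge weights $1 - C(t_i,t_j)$ recover the $w_{i,j}$ up to a positive scaling, while keeping $C$ inside its required range $[0,1]$. Setting $W = \max_{i,j} w_{i,j}$ (the trivial all-zero case is solved directly and dispatched separately) and defining $C(t_i,t_j) = 1 - w_{i,j}/W$ gives $C \in [0,1]$ and induced weights $w_{i,j}/W$.

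Finally I would verify that this rescaling preserves the optimizer: multiplying every edge weight by the positive constant $1/W$, together with the $1/(n-1)$ averaging, does not change which Hamiltonian path attains the maximum, so $\Neut_{\AggAvg}(\mathbf{s}) = \tfrac{1}{(n-1)W}\sum_{\text{adjacent }(i,j)} w_{i,j}$ and an ordering maximizing $\Neut_{\AggAvg}$ induces, through the path-to-ordering correspondence, a maximum-weight Hamiltonian path in $G$. As the construction is clearly polynomial, this establishes that $\NeutProb{\AggAvg}$ is NP-hard. The reduction is nearly immediate once the decay function is fixed; the single point requiring care is the normalization that keeps $C$ within $[0,1]$ without disturbing the $\argmax$, which is the only mild obstacle here.
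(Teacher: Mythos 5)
Your proof is correct and takes essentially the same route as the paper: under the decay function of \autoref{eq:decay} an ordering is exactly a Hamiltonian path on edge weights $1-C(t_i,t_j)$, so $\NeutProb{\AggAvg}$ inherits NP-hardness from $\MiscProb{PathMaxTSP}$ (Theorem~\ref{thm:pathmaxtspnphard}). The only cosmetic difference is your explicit rescaling by $W=\max_{i,j} w_{i,j}$ to keep $C$ in $[0,1]$; the paper sidesteps this because its hard instances of $\MiscProb{PathMaxTSP}$ already have weights in $\set{0,1}$, but your normalization is sound and makes the reduction self-contained for arbitrary nonnegative weights.
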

Given the above hardness results, 
in the rest of the section, we design approximation algorithms to solve $\MiscProb{PathMaxTSP}$.

Between our algorithms \Alg{Mat} and \Alg{CC}, \Alg{CC} has improved efficiency with the same approximation factor, so we advocate for its use over \Alg{Mat} in all cases. We include \Alg{Mat} in our exposition for its comparative simplicity and in the hope that it inspires future work in this area. Our algorithm \Alg{3CC} achieves the best approximation factor but has an unreasonably slow running time.

\tightsubsection{Approximation via Iterated Matching}
The first algorithm, \Alg{Mat} (pseudocode in the appendix), works by making a connection to the well-known max-weight matching problem, where in a weighted graph, the goal is to find a set of disjoint edges with maximum total weight~\citep{kleinberg2006algorithm}.

In each iteration $k$, the algorithm constructs a graph $G_{k+1}$ used in the next iteration. To do so, it first finds a max-weight matching in the graph $G_k$.
Then, for every pair in the matching, it adds a ``super node'' to $G_{k+1}$. 
The super node represents a path in the original graph $G$. To perform this merge, the algorithm joins the represented paths by the pair of endpoints with maximum edge weight. If $|V_k|$ is odd, one of nodes in $G_k$ remains unmatched and gets added to $G_{k+1}$ as is.
The weight of the edge between each pair of nodes in $G_{k+1}$ is the maximum edge weight between the ends of their represented paths.
The algorithm continues this process until there is only one super node left.
The path represented by the final super node is a Hamiltonian path in the original graph.

\begin{figure*}[bhpt]
\small
\begin{subfigure}[t]{0.3\linewidth}
\centering
\begin{tikzpicture}
\tikzset{node distance = 1.5cm}
\tikzset{every node/.style={circle, draw}}
\tikzset{every path/.style={thick, -}}
\tikzset{special/.style={draw=red, line width=2pt}}
\node (t1) {$t_1$};
\node (t2) [right = of t1] {$t_2$};
\node (t3) [below = of t1] {$t_3$};
\node (t4) [below = of t2] {$t_4$};
\node (t5) [right = of t2] {$t_5$};
\node (t6) [right = of t4] {$t_6$};
\tikzset{every node/.style={circle, draw=white, fill=white}}
\draw (t1) -- node {$1$} (t2);
\draw[special] (t1) -- node {$0.8$} (t3);
\draw (t1) -- node [near start] {$0.1$} (t4);
\draw (t2) -- node [near start] {$1$} (t3);
\draw[special] (t2) -- node {$0.3$} (t4);
\draw (t3) -- node {$0$} (t4);
\draw (t2) -- node {$0$} (t5);
\draw (t4) -- node {$0.9$} (t6);
\draw[special] (t5) -- node {$1$} (t6);
\draw (t2) -- node [near start] {$0$} (t6);
\draw (t5) -- node [near start] {$1$} (t4);
\end{tikzpicture}
\vspace{-2mm}
\caption{\scriptsize First iteration}
\label{fig:ex2-1}
\end{subfigure}
\hfill
\begin{subfigure}[t]{0.3\linewidth}
\centering
\begin{tikzpicture}
\tikzset{node distance = 1.5cm}
\tikzset{every node/.style={circle, draw}}
\tikzset{every path/.style={thick, -}}
\tikzset{special/.style={draw=red, line width=2pt}}
\node (t1) {$t_1$};
\node (t2) [right = of t1] {$t_2$};
\node (t3) [below = of t1] {$t_3$};
\node (t4) [below = of t2] {$t_4$};
\node (t5) [right = of t2] {$t_5$};
\node (t6) [right = of t4] {$t_6$};
\tikzset{every node/.style={circle, draw=white, fill=white}}
\draw (t1) -- node {$1$} (t2);
\draw[special] (t1) -- node {$0.8$} (t3);
\draw (t1) -- node [near start] {$0.1$} (t4);
\draw (t2) -- node [near start] {$1$} (t3);
\draw[special] (t2) -- node {$0.3$} (t4);
\draw (t3) -- node {$0$} (t4);
\draw (t2) -- node {$0$} (t5);
\draw (t4) -- node {$0.9$} (t6);
\draw[special] (t5) -- node {$1$} (t6);
\draw (t2) -- node [near start] {$0$} (t6);
\draw[special] (t5) -- node [near start] {$1$} (t4);
\end{tikzpicture}
\vspace{-2mm}
\caption{\scriptsize Second iteration}
\label{fig:ex2-2}
\end{subfigure}
\hfill
\begin{subfigure}[t]{0.3\linewidth}
\centering
\begin{tikzpicture}
\tikzset{node distance = 1.5cm}
\tikzset{every node/.style={circle, draw}}
\tikzset{every path/.style={thick, -}}
\tikzset{special/.style={draw=red, line width=2pt}}
\node (t1) {$t_1$};
\node (t2) [right = of t1] {$t_2$};
\node (t3) [below = of t1] {$t_3$};
\node (t4) [below = of t2] {$t_4$};
\node (t5) [right = of t2] {$t_5$};
\node (t6) [right = of t4] {$t_6$};
\tikzset{every node/.style={circle, draw=white, fill=white}}
\draw[special] (t1) -- node {$1$} (t2);
\draw[special] (t1) -- node {$0.8$} (t3);
\draw (t1) -- node [near start] {$0.1$} (t4);
\draw (t2) -- node [near start] {$1$} (t3);
\draw[special] (t2) -- node {$0.3$} (t4);
\draw (t3) -- node {$0$} (t4);
\draw (t2) -- node {$0$} (t5);
\draw (t4) -- node {$0.9$} (t6);
\draw[special] (t5) -- node {$1$} (t6);
\draw (t2) -- node [near start] {$0$} (t6);
\draw[special] (t5) -- node [near start] {$1$} (t4);
\end{tikzpicture}
\vspace{-2mm}
\caption{\scriptsize Third (final) iteration}
\label{fig:ex2-3}
\end{subfigure}
\Description{Three graphs; the first has three disjoint paths highlighted, the second has two disjoint paths highlighted, and the third has a single Hamiltonian path highlighted.}
\vspace{-3mm}
\caption{(Example~\ref{ex-2}) \Alg{Mat} returns the ordering $\langle t_3, t_1, t_2, t_4,t_5,t_6\rangle$}
\label{fig:ex2}
\end{figure*}

\begin{example}\label{ex-2}
Consider a set of stories $\mathbf{t}=\set{t_1,\dots,t_6}$ with pairwise neutrality values as shown in the graph of Figure~\ref{fig:ex2-1} (for visual clarity, we omit four edges with weight zero).
\Alg{Mat} starts by finding the max-weight matching $\{(t_1,t_3), (t_2,t_4), (t_5,t_6)\}$, as highlighted in the figure.
Next, the algorithm replaces the pairs in the matching with super nodes: $\langle t_1,t_3\rangle$, $\langle t_2,t_4\rangle$, and $\langle t_5,t_6\rangle$.
In the second iteration, the algorithm selects edge $(t_4,t_5)$ with weight $1$ to join the super nodes $\langle t_2,t_4\rangle$ and $\langle t_5,t_6\rangle$ (Figure~\ref{fig:ex2-2}).
In the final iteration, the algorithm matches $\langle t_2,t_4,t_5,t_6\rangle$ to $\langle t_1,t_3\rangle$, via edge $(t_1,t_2)$, creating the final super node, $\langle t_3,t_1,t_2,t_4,t_5,t_6\rangle$.
The neutrality of the resulting ordering under $\AggAvg$ aggregation is $4.1 / 5 = 0.82$.
\end{example}

\begin{theorem}\label{th:mtsp1}
\Alg{Mat} returns a $1/2$-approximation for $\MiscProb{Path}$-$\MiscProb{MaxTSP}$.
\end{theorem}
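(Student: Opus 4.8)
The plan is to lower-bound the weight of the Hamiltonian path returned by \Alg{Mat} by the weight of the very first matching it computes, and then to lower-bound that first matching by half the optimum. Let $\text{OPT}$ denote the total weight of a maximum-weight Hamiltonian path in $G$, and let $M_1$ be the max-weight matching that \Alg{Mat} finds in $G_1 = G$ during its first iteration.

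First I would argue that the returned path contains every edge of $M_1$. This follows from the structure of the merge operation: each super node always represents an actual path in the original graph, and merging two matched super nodes only ever \emph{adds} a single joining edge between a pair of endpoints, without deleting any edge already internal to the represented paths. Hence the edges of $M_1$ (the initial two-vertex paths) survive untouched into the single final super node, which represents the output Hamiltonian path. Since all edge weights are nonnegative, every joining edge added in iterations $2, 3, \dots$ contributes nonnegatively, so the weight of the output path is at least $w(M_1)$.

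Next I would show $w(M_1) \geq \text{OPT}/2$. Take an optimal Hamiltonian path $P^*$, written as $v_{\pi(1)} - v_{\pi(2)} - \cdots - v_{\pi(n)}$. Its $n-1$ edges partition into two matchings by taking alternate edges: the odd-indexed edges form a matching $A$ and the even-indexed edges form a matching $B$, with $w(A) + w(B) = \text{OPT}$. One of them, say $A$, therefore has weight at least $\text{OPT}/2$. Because $M_1$ is a maximum-weight matching in $G$ and $A$ is itself a matching in $G$, we get $w(M_1) \geq w(A) \geq \text{OPT}/2$.

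Combining the two inequalities yields $w(\text{output}) \geq w(M_1) \geq \text{OPT}/2$, which is exactly the claimed $1/2$-approximation. The only place that needs care is the first step: stating cleanly the invariant that at every iteration each super node corresponds to a genuine path in $G$ with all of its internal edges retained, so that no weight is ever discarded. Once that invariant is in place, nonnegativity of the weights makes the bookkeeping monotone and the remainder is immediate; I do not anticipate a substantive obstacle.
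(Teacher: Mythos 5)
Your proposal is correct and follows essentially the same route as the paper's proof: decompose the optimal Hamiltonian path into two alternating matchings to show the first max-weight matching has weight at least $\mathrm{OPT}/2$, then observe that subsequent iterations only add nonnegative-weight joining edges and never discard the matched edges. Your explicit statement of the invariant that each super node retains all internal edges of the path it represents is a slightly more careful phrasing of the paper's remark that ``all future iterations cannot worsen the approximation,'' but the substance is identical.
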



We have not yet explicitly specified a subroutine to compute a max-weight matching. Classically, this can be done in $O\bigl(n^4\bigr)$ time using Edmonds' blossom algorithm~\citep{edmonds_1965}. Alternatively, it can be done in $O\bigl(n^2\varepsilon^{-1}\log\varepsilon^{-1}\bigr)$ time for any fixed error $\varepsilon$~\citep{10.1145/2529989}.
Properly implemented, the runtime of each iteration of the loop is dominated by the cost of the matching. If we use Edmonds' blossom algorithm, then each iteration takes $O\bigl(|V_k|^4\bigr)$ time. By the master theorem~\citep{clrs}, the overall runtime is then $O\bigl(n^4\bigr)$.
While polynomial, \Alg{Mat} has a high time complexity. Therefore, next we propose our algorithm \Alg{CC} that, while maintaining the same approximation ratio, reduces time complexity by a factor of $n$.

\tightsubsection{Approximation via Iterated Cycle Cover}
The second algorithm, \Alg{CC} (pseudocode in \autoref{alg:PathMaxTSP4}), finds a max-weight cycle cover, defined as a set of cycles\footnote{Here, cycles of length 2 are allowed.} of maximum total weight such that every vertex is included in exactly one cycle. Then, it removes the min-weight edge from each cycle. The resulting paths are treated as super nodes (as in \Alg{Mat}) and the process is repeated until there is only one super node remaining. The final super node implicitly gives a Hamiltonian path in the original graph.

\begin{algorithm}[bhpt]
	\caption{Approximating $\MiscProb{PathMaxTSP}$ via iterated cycle cover}
    \label{alg:PathMaxTSP4}
	\begin{algorithmic}[1]
	\Procedure{ApproxCC}{$G=(V,E)$}
	    \State $k \gets 0$; 
	     $G_k \gets G$;
	     $size \gets n$
	    \While{$size > 1$}
    		\State Compute a max-weight cycle cover $C$ in $G_k$.
    		\State Construct a graph $G_{k+1}=(V_{k+1},E_{k+1})$ as follows.
    		\State $V_{k+1} \gets \set{}$
    		\ForAll{$c \in C$}
    		    \LongState{3}{Remove the min-weight edge, and let $(p_1), \dots, (p_a)$ be the resulting path.}
    		    \State $p \gets p_1$
    		    \ForAll{$i \in [2,a]$}
    		        \LongState{4}{Let $u_1, \dots, u_b$ be the path denoted by $p$ and $v_1, \dots, v_d$ be the path denoted by $p_i$.}
    		      \State \textbf{if} $w(u_b,v_1) > w(u_b,v_d)$ \textbf{then} $p \gets p, v_1, \dots, v_d$ 
    		      \State \textbf{else} $p \gets p, v_d, \dots, v_1$
    		    \EndFor
    		    \State Add $(p)$ to $V_{k+1}$.
    		\EndFor
    		\ForAll{pairs $(u_1,\dots,u_a), (v_1,\dots,v_b)$ in $V_{k+1}$}
    		    \LongState{3}{Add the edge $((u_1,\dots,u_a), (v_1,\dots,v_b))$ to $E_{k+1}$ with weight $w(u_a,v_1)$.}  
    		\EndFor
    		\State $size \gets |V_{k+1}|$; 
    		 $k \gets k+1$
	    \EndWhile
        \LongState{1}{\textbf{return} the Hamiltonian path $v_1, \dots, v_n$ in $G$ where $(v_1, \dots, v_n)$ is the sole vertex in $V_k$.}
	\EndProcedure
	\end{algorithmic}
\end{algorithm}

In this algorithm, we reduce the problem of computing a cycle cover to that of computing a bipartite matching. The original reduction is due to \citet{tutte_1954}; an accessible presentation of the specific case we are interested in is given by \citet{Nikolaev2021}.

\begin{figure*}[bhpt]
\small
\begin{subfigure}{0.3\linewidth}
\centering
\begin{tikzpicture}
\tikzset{node distance = 1.5cm}
\tikzset{every node/.style={circle, draw}}
\tikzset{every path/.style={thick, -}}
\tikzset{special/.style={draw=red, line width=2pt}}
\node (t1) {$t_1$};
\node (t2) [right = of t1] {$t_2$};
\node (t3) [below = of t1] {$t_3$};
\node (t4) [below = of t2] {$t_4$};
\node (t5) [right = of t2] {$t_5$};
\node (t6) [right = of t4] {$t_6$};
\tikzset{every node/.style={circle, draw=white, fill=white}}
\draw[special]  (t1) -- node {$1$} (t2);
\draw[special] (t1) -- node {$0.8$} (t3);
\draw (t1) -- node [near start] {$0.1$} (t4);
\draw[special] (t2) -- node [near start] {$1$} (t3);
\draw (t2) -- node {$0.3$} (t4);
\draw (t3) -- node {$0$} (t4);
\draw (t2) -- node {$0$} (t5);
\draw[special]  (t4) -- node {$0.9$} (t6);
\draw[special]  (t5) -- node {$1$} (t6);
\draw (t2) -- node [near start] {$0$} (t6);
\draw [special] (t5) -- node [near start] {$1$} (t4);
\end{tikzpicture}
\vspace{-2mm}
\caption{\scriptsize First iteration}
\label{fig:ex3-1}
\end{subfigure}
\hfill
\begin{subfigure}{0.3\linewidth}
\centering
\begin{tikzpicture}
\tikzset{node distance = 1.5cm}
\tikzset{every node/.style={circle, draw}}
\tikzset{every path/.style={thick, -}}
\tikzset{special/.style={draw=red, line width=2pt}}
\node (t1) {$t_1$};
\node (t2) [right = of t1] {$t_2$};
\node (t3) [below = of t1] {$t_3$};
\node (t4) [below = of t2] {$t_4$};
\node (t5) [right = of t2] {$t_5$};
\node (t6) [right = of t4] {$t_6$};
\tikzset{every node/.style={circle, draw=white, fill=white}}
\draw[special]  (t1) -- node {$1$} (t2);
\draw (t1) -- node {$0.8$} (t3);
\draw (t1) -- node [near start] {$0.1$} (t4);
\draw[special] (t2) -- node [near start] {$1$} (t3);
\draw (t2) -- node {$0.3$} (t4);
\draw (t3) -- node {$0$} (t4);
\draw (t2) -- node {$0$} (t5);
\draw  (t4) -- node {$0.9$} (t6);
\draw[special]  (t5) -- node {$1$} (t6);
\draw (t2) -- node [near start] {$0$} (t6);
\draw [special] (t5) -- node [near start] {$1$} (t4);
\end{tikzpicture}
\vspace{-2mm}
\caption{\scriptsize First iteration (transforming cycles to paths)}
\label{fig:ex3-2}
\end{subfigure}
\hfill
\begin{subfigure}{0.3\linewidth}
\centering
\begin{tikzpicture}
\tikzset{node distance = 1.5cm}
\tikzset{every node/.style={circle, draw}}
\tikzset{every path/.style={thick, -}}
\tikzset{special/.style={draw=red, line width=2pt}}
\node (t1) {$t_1$};
\node (t2) [right = of t1] {$t_2$};
\node (t3) [below = of t1] {$t_3$};
\node (t4) [below = of t2] {$t_4$};
\node (t5) [right = of t2] {$t_5$};
\node (t6) [right = of t4] {$t_6$};
\tikzset{every node/.style={circle, draw=white, fill=white}}
\draw[special]  (t1) -- node {$1$} (t2);
\draw (t1) -- node {$0.8$} (t3);
\draw[special] (t1) -- node [near start] {$0.1$} (t4);
\draw[special] (t2) -- node [near start] {$1$} (t3);
\draw (t2) -- node {$0.3$} (t4);
\draw (t3) -- node {$0$} (t4);
\draw (t2) -- node {$0$} (t5);
\draw  (t4) -- node {$0.9$} (t6);
\draw[special]  (t5) -- node {$1$} (t6);
\draw (t2) -- node [near start] {$0$} (t6);
\draw [special] (t5) -- node [near start] {$1$} (t4);
\end{tikzpicture}
\vspace{-2mm}
\caption{\scriptsize Second iteration (after transforming cycle to path)}
\label{fig:ex3-3}
\end{subfigure}
\Description{Three graphs; the first has two disjoint cycles highlighted, the second has two disjoint paths highlighted, and the third has a single Hamiltonian path highlighted.}
\vspace{-3mm}
\caption{(Example~\ref{ex-3}) \Alg{CC} returns the ordering $\langle t_3, t_2, t_1, t_4, t_5, t_6 \rangle$}\label{fig:ex3}
\end{figure*}

\begin{example}\label{ex-3}
Consider a set of stories $\mathbf{t}=\set{t_1,\dots,t_6}$ with pairwise neutrality values as shown in the graph of Figure~\ref{fig:ex3-1} (for visual clarity, we omit four edges with weight zero).
\Alg{CC} starts by finding the max-weight cycle cover $\{(t_1, t_2, t_3), (t_4, t_5, t_6)\}$, as highlighted in the figure.
Then, it removes the min-weight edge from each cycle (Figure~\ref{fig:ex3-2}).
Next, the algorithm replaces these paths with super nodes: $\langle t_1, t_2, t_3\rangle$ and $\langle t_4, t_5, t_6\rangle$.
In the second iteration, the algorithm joins the two super nodes to form the cycle $(t_3, t_2, t_1, t_4, t_5, t_6)$ and removes the edge $(t_6, t_3)$ to create the final super node (Figure~\ref{fig:ex3-3}).
The neutrality of the resulting ordering under $\AggAvg$ aggregation is $4.1 / 5 = 0.82$.
\end{example}

\begin{theorem}\label{th:mtsp2}
\Alg{CC} returns a $1/2$-approximation for $\MiscProb{Path-}$-$\MiscProb{MaxTSP}$.
\end{theorem}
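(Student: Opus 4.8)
The plan is to bound the weight of the Hamiltonian path returned by \Alg{CC} from below by charging it against the optimum already in the \emph{first} iteration, and then to note that nothing committed in that iteration is ever given back. Write $\mathrm{OPT}$ for the weight of a maximum-weight Hamiltonian path in $G$, and let $C_0$ denote the max-weight cycle cover computed on $G_0 = G$ in the first pass of the loop.

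First I would argue that $w(C_0) \geq \mathrm{OPT}$. Take a maximum-weight Hamiltonian path and close it into a Hamiltonian cycle by adding the single edge between its two endpoints; because edge weights are nonnegative, the resulting cycle has weight at least $\mathrm{OPT}$. A Hamiltonian cycle is a cycle cover consisting of one cycle through all vertices, so the maximum-weight cycle cover is at least as heavy, giving $w(C_0) \geq \mathrm{OPT}$.

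Next I would control the weight lost when each cycle is opened into a path. Every cycle $c$ in the cover has at least two edges, so its minimum-weight edge has weight at most $w(c)/2$; deleting it therefore leaves a path of weight at least $w(c)/2$. Summing over the cycles of $C_0$, the total weight $A_0$ of the edges retained in the first iteration satisfies $A_0 \geq w(C_0)/2 \geq \mathrm{OPT}/2$. The length-$2$ case, admitted by the footnote in the algorithm, sits exactly at the boundary of this estimate and is covered by the same inequality.

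Finally, I would observe that the edges counted in $A_0$ are genuine edges of $G$ that become internal to the super nodes, and that every later iteration only appends additional nonnegative-weight connecting edges while never deleting an edge already internal to a super node. Hence the final Hamiltonian path has weight at least $A_0 \geq \mathrm{OPT}/2$, and since \Alg{CC} returns a bona fide Hamiltonian path, this is the claimed $1/2$-approximation. The crux is the first step --- cleanly justifying that the max-weight cycle cover dominates $\mathrm{OPT}$ via nonnegativity and the fact that a Hamiltonian cycle is a legal single-cycle cover; the min-edge accounting and the monotonicity of the committed weight across iterations are then routine, and the same template (replacing ``cycle cover $\geq$ Hamiltonian cycle'' and min-edge deletion by ``max matching $\geq$ half of any Hamiltonian path'') recovers the analysis of \Alg{Mat}.
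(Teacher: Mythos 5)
Your argument is correct and follows essentially the same route as the paper's proof: bound the max-weight cycle cover from below by the optimal Hamiltonian path (via closing it into a cycle), lose at most half the weight when deleting the min-weight edge of each cycle (since each cycle has at least two edges), and note that subsequent iterations can only add weight. The only difference is that you spell out a couple of steps (nonnegativity when closing the path into a cycle, and the monotonicity of the committed weight) that the paper states more tersely.
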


We can compute a max-weight bipartite matching in $O\bigl(n^3\bigr)$ time using the Hungarian method~\citep{10.1145/321694.321699, https://doi.org/10.1002/net.3230010206}. It can also be done in expected time $O\bigl(n^2 \log n\bigr)$ if the edge weights are i.i.d. random variables~\citep{https://doi.org/10.1002/net.3230100205}. Again, we can use the linear-time approximation algorithm for general max-weight matching instead.
Properly implemented, the runtime of each iteration of the loop is dominated by the cost of the bipartite matching. If we use the Hungarian method, then each iteration takes $O\bigl(|V_k|^3\bigr)$ time. By the master theorem, the overall runtime is then $O\bigl(n^3\bigr)$.

\tightsubsection{Approximation via 3-Cycle Cover}
So far, both algorithms proposed are $1/2$-approximation algorithms. Our third algorithm, \Alg{3CC} (pseudocode in the appendix) improves the approximation factor to $2/3$, but at a high computation cost.
It finds a max-weight 3-cycle cover, defined as a set of cycles of maximum total weight such that every vertex is included in exactly one cycle and every cycle has length at least 3.
It then removes the min-weight edge from each cycle and arbitrarily joins the resulting paths to form a Hamiltonian path.


We reduce the problem of computing a max-weight 3-cycle cover to that of computing a max-weight matching on a more complex graph. A thorough presentation of the reduction is given by \citet{stack8570}. The original reduction was a generalization of this argument given by \citet{tutte_1954}.

We have not yet explicitly specified a subroutine to compute a max-weight matching. We can use Edmonds' blossom algorithm or the linear time approximation algorithm for max-weight matching. The time to compute the matching dominates the rest of the computation, so the overall time complexity of \Alg{3CC} is the time complexity of running the preferred algorithm on a graph with
$|V| = 2n^2 - 4n$ and
$|E| = n^3 - 3.5n^2 + 2.5n$.
With the blossom algorithm, this leads to a overall runtime of $O\bigl(n^7\bigr)$. As such, this algorithm is not practical (we do not use it in our experiments), but it is of theoretical interest, as evidenced by the following theorem.

\begin{theorem}\label{th:mtsp3}
\Alg{3CC} returns a $2/3$-approximation for $\MiscProb{Path}$-$\MiscProb{MaxTSP}$.
\end{theorem}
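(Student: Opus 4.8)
The plan is to chain three inequalities that together relate the weight of the Hamiltonian path returned by \Alg{3CC} to the optimum, which I will denote $\text{OPT}$ (the weight of a maximum-weight Hamiltonian path). Write $w(\cdot)$ for total weight, and let $CC$ denote the max-weight $3$-cycle cover computed in the first step of the algorithm. The three links are: (i) $w(CC) \geq \text{OPT}$; (ii) deleting the minimum-weight edge from each cycle of $CC$ retains at least a $2/3$ fraction of $w(CC)$; and (iii) arbitrarily joining the resulting paths into a single Hamiltonian path does not decrease the weight. Composing these yields $w(\text{output}) \geq \tfrac{2}{3}\,\text{OPT}$.

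For step (i), I would start from an optimal Hamiltonian path $P^*$ with $w(P^*)=\text{OPT}$ and close it into a Hamiltonian cycle by adding the single edge joining its two endpoints. The graph is complete, so this edge exists, and since all weights are nonnegative the resulting cycle has weight at least $\text{OPT}$. For $n \geq 3$ a Hamiltonian cycle is itself a feasible $3$-cycle cover (one cycle of length $n \geq 3$), so the max-weight $3$-cycle cover satisfies $w(CC) \geq \text{OPT}$. The degenerate cases $n \leq 2$ are handled directly.

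For step (ii), the key observation is a simple averaging argument: in a cycle $c$ of length $\ell$ the minimum-weight edge has weight at most $w(c)/\ell$. Because every cycle in a $3$-cycle cover has $\ell \geq 3$, removing that edge discards at most $w(c)/3$, leaving a path of weight at least $\tfrac{2}{3}w(c)$; summing over all cycles gives total path weight at least $\tfrac{2}{3}w(CC)$. This is exactly where the length-$\geq 3$ constraint pays off, since allowing $2$-cycles (as in \Alg{CC}) would only guarantee a $1/2$ fraction. Step (iii) is then immediate: joining $k$ paths into one Hamiltonian path inserts $k-1$ edges of nonnegative weight, so the final weight is at least the sum of the path weights.

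I expect the main obstacle to be identifying and justifying the correct lower bound in step (i): the natural comparison object is not the optimal path itself (which is not a $3$-cycle cover) but the optimal Hamiltonian \emph{cycle}, and the argument hinges on nonnegativity of weights to ensure that closing the path only helps. Once that comparison is set up, the remaining steps are elementary, and the whole argument reduces to verifying that the per-cycle averaging bound and the zero net cost of joining combine to precisely $2/3$.
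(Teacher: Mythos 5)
Your proposal is correct and follows essentially the same route as the paper's proof: lower-bound the max-weight $3$-cycle cover by the optimal Hamiltonian path (via the Hamiltonian cycle obtained by closing it), lose at most a third of each cycle's weight when deleting its minimum edge, and join the resulting paths at no cost. Your step (i) is in fact slightly more explicit than the paper's, which simply asserts that a max-weight Hamiltonian cycle has weight at least $W$ without spelling out the closing-edge and nonnegativity argument.
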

\tightsection{Maximizing Neutrality under Min Aggregation}\label{sec:aggmin}
Now, we consider the scenario where our aggregation function $f$ is the $\AggMin$ function: it returns the smallest element in a totally ordered set. This is another well motivated aggregation function; if we are okay with many imperfect pairs but just want to make sure that no pair is \emph{too} bad, then it makes sense to maximize the neutrality of the most biased pair.
In the graph representation, the problem is now equivalent to finding a Hamiltonian path with maximum min-weight edge. This problem was first studied by \citet{arkin1999maximum}. We will refer to it as the ``path maximum scatter traveling salesman problem'', or $\MiscProb{PathMaxScatterTSP}$.
\begin{definition}[$\MiscProb{PathMaxScatterTSP}$]
Given a graph $G$, \emph{$\MiscProb{PathMax}$-$\MiscProb{ScatterTSP}$} is the problem of finding a Hamiltonian path with minimum edge weight maximized.
\end{definition}
The cycle variant has been successfully addressed with heuristic methods~\citep{8790018}, but there are no published results on algorithms for $\MiscProb{PathMaxScatterTSP}$.
Unfortunately, we will have to rely on heuristic methods, as we have the following inapproximability result by \citet{arkin1999maximum}.
\begin{theorem}\label{thm:minagg}
There is no polynomial-time constant-factor approximation algorithm for $\MiscProb{PathMaxScatterTSP}$ unless $\mathrm{P} = \mathrm{NP}$.
\end{theorem}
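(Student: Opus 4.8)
The plan is a gap-introducing reduction from $\MiscProb{HamPath}$, which is NP-complete, exploiting that the $\MiscProb{PathMaxScatterTSP}$ objective is a bottleneck (minimum-edge) quantity. I would argue by contradiction: suppose some polynomial-time algorithm $\mathcal{A}$ returns, on every input, a Hamiltonian path whose minimum edge weight is at least $\rho$ times the optimum, for a fixed constant $\rho \in (0,1]$. The goal is to turn $\mathcal{A}$ into a polynomial-time decider for $\MiscProb{HamPath}$, which would force $\mathrm{P} = \mathrm{NP}$.

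Given an instance $H = (V,E)$ of $\MiscProb{HamPath}$, I would build the complete weighted graph $G$ on the same vertices by assigning weight $1$ to every edge of $E$ and weight $0$ to every non-edge; all weights are nonnegative, as the problem requires, and the construction is clearly polynomial. The heart of the argument is a dichotomy. If $H$ has a Hamiltonian path, the same vertex ordering is a Hamiltonian path of $G$ using only weight-$1$ edges, so its bottleneck is $1$ and $\text{OPT}(G) = 1$. If $H$ has no Hamiltonian path, then every ordering of $V$ must place some pair non-adjacent in $H$ consecutively---otherwise that ordering would itself be a Hamiltonian path of $H$---so every Hamiltonian path of $G$ traverses a weight-$0$ edge, giving $\text{OPT}(G) = 0$.

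With this in hand, I would run $\mathcal{A}$ on $G$ and read off the minimum edge weight $b$ of the returned path. In the ``yes'' case, $b \geq \rho \cdot \text{OPT}(G) = \rho > 0$; in the ``no'' case, $\mathcal{A}$ still returns a feasible Hamiltonian path, whose value cannot exceed $\text{OPT}(G) = 0$, so $b = 0$. Hence $b > 0$ if and only if $H$ is a ``yes'' instance, deciding $\MiscProb{HamPath}$ in polynomial time. Note that this separation needs only $\rho > 0$, so the reduction need not know the constant in advance.

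The step I expect to be the main obstacle is the ``no''-case analysis: showing that the absence of a Hamiltonian path in $H$ forces every Hamiltonian path of the complete graph $G$ through at least one weight-$0$ edge, which rests on the exact equivalence between Hamiltonian paths of $H$ and all-weight-$1$ Hamiltonian paths of $G$. A secondary point to state carefully is why a zero optimum is not fatal: since $\text{OPT}$ is a maximum over feasible solutions, the returned path has value at most $\text{OPT}$, so $\text{OPT} = 0$ pins the output to exactly $0$ and yields the clean threshold test. If one prefers to avoid reasoning about a zero optimum, I would instead assign a small positive weight $\beta$ with $0 < \beta < \rho$ to non-edges, producing a multiplicative gap larger than $1/\rho$ between the two cases and distinguishing them by comparing $b$ against $\beta$.
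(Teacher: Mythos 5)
Your proof is correct. The paper does not prove this theorem itself---it cites the inapproximability result of Arkin et al.\ (1999)---but your gap-introducing reduction from $\MiscProb{HamPath}$ (weight $1$ on edges, weight $0$ on non-edges, so the bottleneck optimum is $1$ or $0$ according to whether a Hamiltonian path exists, and any constant-factor approximation would separate the two cases) is exactly the standard argument underlying that citation, including the correct observation that a zero optimum forces the returned value to be exactly zero.
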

\begin{corollary}
There is no polynomial-time constant-factor approximation algorithm for $\NeutProb{\AggMin}$ unless $\mathrm{P} = \mathrm{NP}$.
\end{corollary}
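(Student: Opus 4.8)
The plan is to reduce $\MiscProb{PathMaxScatterTSP}$ to $\NeutProb{\AggMin}$, so that the inapproximability of the former (\autoref{thm:minagg}) transfers to the latter. The key observation is that, under the decay function of \autoref{eq:decay}, $\NeutProb{\AggMin}$ is \emph{exactly} $\MiscProb{PathMaxScatterTSP}$ on the graph $G$ whose edge weights are the pairwise neutralities $N_{i,j} = 1 - C(t_i,t_j)$. Indeed, with this decay every non-adjacent pair contributes neutrality $1$ (the largest possible value), so the minimum over all pairs is attained on adjacent pairs and equals the smallest edge weight along the Hamiltonian path induced by the ordering $\mathbf{s}$. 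Maximizing $\Neut_{\AggMin}$ is therefore identical to finding a Hamiltonian path whose minimum edge weight is as large as possible.

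First I would make the reduction explicit. Given any $\MiscProb{PathMaxScatterTSP}$ instance with edge weights $w_{i,j} \in [0,1]$, define a $\NeutProb{\AggMin}$ instance on $n$ stories by setting $C(t_i,t_j) = 1 - w_{i,j}$ and using the decay of \autoref{eq:decay}; this is a valid POP function since $w_{i,j} \in [0,1]$. By the observation above, every ordering corresponds to a Hamiltonian path with \emph{equal} objective value, so a $\rho$-approximation for $\NeutProb{\AggMin}$ yields, in polynomial time, a $\rho$-approximation for $\MiscProb{PathMaxScatterTSP}$ restricted to instances with weights in $[0,1]$.

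It then remains to confirm that \autoref{thm:minagg} already holds under this weight restriction. The standard hardness argument reduces from $\MiscProb{HamPath}$: edges present in the input graph receive weight $1$ and absent edges receive a small weight $\varepsilon > 0$, so the optimal scatter value is $1$ when a Hamiltonian path exists and at most $\varepsilon$ otherwise. For any putative constant factor $\rho$, choosing $\varepsilon < \rho$ lets a $\rho$-approximation distinguish the two cases and hence decide $\MiscProb{HamPath}$. Since these instances use weights in $\set{\varepsilon, 1} \subseteq [0,1]$, the inapproximability survives the restriction, and the corollary follows.

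The main obstacle is the multiplicative nature of constant-factor approximation together with the presence of small or zero weights: one must ensure that the hard instances witnessing \autoref{thm:minagg} can be taken with weights in $[0,1]$ (so that they are expressible as genuine POP functions) and that the objective value is preserved exactly---not merely the identity of the optimal ordering---so that the approximation ratio carries over unchanged. Once the $C = 1 - w$ correspondence is seen to match objectives on the nose, the rest is routine.
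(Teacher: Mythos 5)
Your proof is correct and takes essentially the same route the paper intends: under the decay function of \autoref{eq:decay}, $\NeutProb{\AggMin}$ coincides with $\MiscProb{PathMaxScatterTSP}$ on $[0,1]$-weighted graphs via $C = 1 - w$, so the inapproximability of \autoref{thm:minagg} transfers directly (the paper leaves this implicit and gives no written proof). Your extra check that the hard instances witnessing \autoref{thm:minagg} can be taken with weights in $[0,1]$ is a worthwhile bit of diligence the paper omits, but it does not change the argument.
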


We now adapt the $\MiscProb{BottleneckATSP}$ heuristic algorithm of \citet{LARUSIC201420} and design the first heuristic algorithm for the $\MiscProb{PathMaxScatterTSP}$ problem.

Given a graph $G = (V,E)$ and parameter $\delta$, we define the graph $G' = (V,E')$ with edge set defined as follows. For each edge $(u,v) \in E$ with weight at least $\delta$, we have an edge $(u,v) \in E'$ with weight 0. For each edge $(u,v) \in E$ with weight $w$ less than $\delta$, we have an edge $(u,v) \in E'$ with weight $\delta - w$. Then, there is a Hamiltonian cycle in $G$ with minimum edge weight at least $\delta$ if and only if there is a cycle in $G'$ with total weight 0. We use any heuristic solver for $\MiscProb{TSP}$ to predict whether such a cycle exists in $G'$; we will use 2-opt~\citep{10.2307/167074}, a simple but effective local search algorithm. One useful property of 2-opt is that it is an \emph{anytime} algorithm. The user can choose to terminate it before convergence and obtain a slightly sub-optimal solution to save time if computational resources are scarce.

If we perform a binary search over all possible edge weights $\delta$, we can estimate the maximum value of $\delta$ such that there is a Hamiltonian cycle in $G$ with minimum edge weight at least $\delta$. If we then remove the min-weight edge from that cycle, the resulting path is an approximate solution to the instance of $\MiscProb{PathMaxScatterTSP}$.
The full pseudocode is shown in \autoref{alg:PathMaxScatterTSP}.

\begin{algorithm}[bhpt]
	\caption{Heuristic for $\MiscProb{PathMaxScatterTSP}$}
    \label{alg:PathMaxScatterTSP}
	\begin{algorithmic}[1]
	\Procedure{IsFeasible}{$G = (V,E), \delta$}
	    \LongState{1}{Define the graph $G' = (V,E')$ such that for each edge $(u,v) \in E$ with weight $w$, we have an edge $(u,v) \in E'$ with weight $\max(\delta - w, 0)$.}
		\State $C' \gets \textsc{2-opt}(G')$
		\LongState{1}{Let $C$ be the corresponding cycle in $G$ and $W$ the total weight of $C'$.}
		\State \textbf{if} $W = 0$ \textbf{then} \textbf{return} $C, \texttt{True}$
		\State \textbf{else}  \textbf{return} $C, \texttt{False}$
	\EndProcedure
	\Procedure{BinarySearch}{$G,l,r$}
	    \State $m \gets \lceil(l+r)/2\rceil$
        \State $C, X \gets \textsc{IsFeasible}(G,m)$
        \State \textbf{if} $l = r$ \textbf{then} \textbf{return} $C$
        \State \textbf{if} $X$ \textbf{then} \textbf{return} $\textsc{BinarySearch}(G,m,r)$
        \State \textbf{return} $\textsc{BinarySearch}(G,l,m-1)$
	\EndProcedure
	\Procedure{PathMaxScatterTSP}{$G$}
        \State Consider the sorted list of all edge weights.
        \State Let $l$ and $r$ be the minimum and maximum, respectively.
        \State $C \gets \textsc{BinarySearch}(G,l,r)$
        \LongState{1}{Remove the min-weight edge from $C$ to construct a path $P$.}
        \State \textbf{return} $P$
	\EndProcedure
	\end{algorithmic}
\end{algorithm}
\tightsection{Experiments}\label{sec:experiments}
Now that we have finished introducing all the algorithms, we present our experiments.
First, we describe the data collection and generation process, then we discuss hardware and implementation details, and finally, we report the results of the experiments.
Our implementations and data are freely available online~\citep{rishi_advani_2023_7955163}.

\tightsubsection{Data}
To measure the empirical performance of our algorithms, we tested them on real, semi-synthetic, and synthetic data.

\stitle{Real Data}
%
We selected two news sources based in the United States: American Thinker and The Federalist.
On July 24, 2022, we collected the first 11 headlines from each homepage.
Within each source, every possible pair of headlines was labeled by 3 out of 6 total annotators according to whether or not they thought it may lead to significant opinion priming.
The following prompt was given to all annotators.

\vspace{1mm}
\noindent
\framebox[\linewidth]{\itshape
\parbox{0.95\linewidth}{
Suppose an average adult residing in the United States is viewing news headlines.\\
If the subject views headline A and headline B together, will their impression of either story likely be different from what it would have been if the subject had viewed them individually?\\
I.e., would viewing the headline of one story influence their opinion on the veracity of the content of the other story or the causes, effects, or benefits of the events discussed within?}
}
\vspace{1mm}

Every annotator was given 55 headline pairs; for each pair, 
the possible answers were ``yes'', ``no'', and ``maybe'', corresponding to pairwise neutrality values of $0$, $1$, and $0.5$, respectively.
For 35.5\% of the pairs, all annotators agreed. For 85.5\%, the majority agreed. The 15.5\% of cases where there was no consensus confirm our expectation that opinion priming can be specific to each individual.

After collecting the labels, we took the average value for each pair of headlines and used these values to define the POP function. These values serve as loose estimates of average audience perceptions.

\stitle{Semi-Synthetic Data}
To show that our algorithms scale to larger data, we created a dataset larger than we were able to collect labels for.
We considered several common probability distributions, and for each one, we computed both the parameters that best fit our data and the likelihood of the data given that distribution with those parameters. We then chose the distribution with greatest likelihood. In short, we found the distribution that best fit the labeled data. For our data, the best fit was a beta distribution.


To generate the semi-synthetic data, we create a complete graph, and for each edge, we sample a value from the chosen distribution for the weight. In this way, we are able to generate a graph corresponding to a dataset of any size that has edge weights matching the distribution of weights in the real data.

\stitle{Synthetic Data}
In addition to the semi-synthetic data, we also generated graphs with edge weights not drawn independently.
Intuitively, if $C(u,w)$ and $C(v,w)$ are high, then $C(u,v)$ is likely to be high as well, so in this setting, we draw edge weights from the original beta distribution, but then enforce that there are no triangles in the graph where only two edges have ``high'' POP values.
Adding a second distribution also served as a way to test the robustness of our algorithms to changes in the data distribution.

\tightsubsection{Implementation}
\stitle{Hardware}
All experiments were run on a machine with an Intel(R) Core(TM) i5-8265U CPU @ 1.80 GHz and 16.0 GB of memory running Windows 11 Pro.

\stitle{Software}
All methods were implemented using Python 3.10.2. The NetworkX package~\citep{SciPyProceedings_11} (version 2.6.3) was used for the graph representations and several graph algorithms. The SciPy library~\citep{2020SciPy-NMeth} (version 1.8.0) was used in the implementation of the statistical test and for the fitting of and sampling from probability distributions. 

\stitle{Implementation Details}
The NetworkX method used for max-weight matching relies on a blossom-type algorithm~\citep{edmonds_1965}. For max-weight bipartite matching, we used the algorithm of \citet{https://doi.org/10.1002/net.3230100205}.

When evaluating the algorithms for maximizing neutrality, the methods were run on 4 random graphs with edge weights sampled from the same distribution, and the neutrality values obtained were averaged; this helps account for the randomness in the data. In addition, for each graph, the methods were run 3 times and the minimum execution time was recorded to account for any unrelated changes in processor utilization affecting the speed of computation.

When evaluating the algorithm for detecting cherry-picked orderings, the methods were run 5 times and the minimum execution time was recorded; the relative efficiency of the detection methods allows us to run them a higher number of times.

\tightsubsection{Results}
\begin{figure}[pt]
\centering
\includegraphics[width=\linewidth]{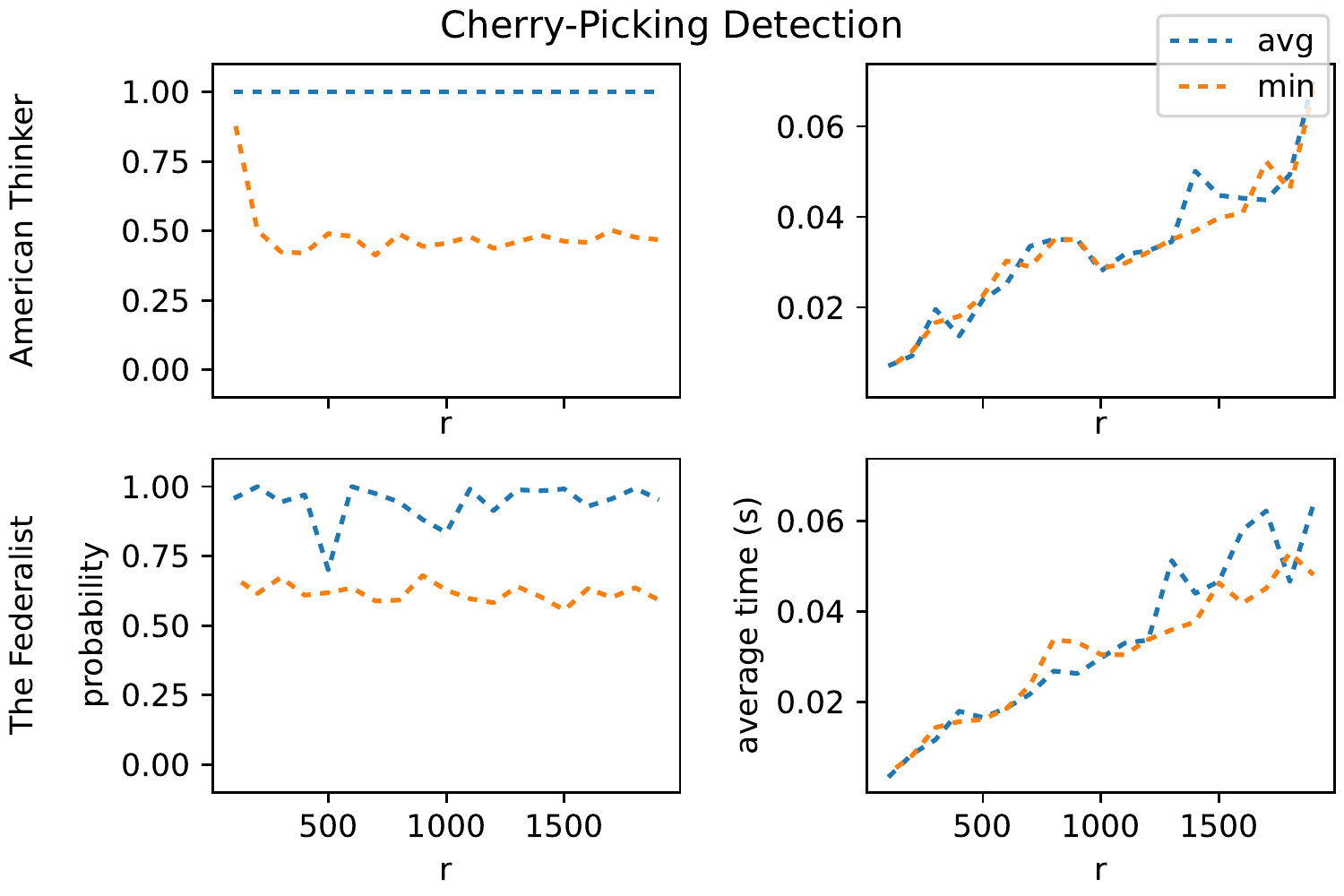}
\Description{A plot of probabilities and execution times varying with sample size for American Thinker and The Federalist; the plots of probabilities are roughly horizontal lines, and those of execution times are linear.}
\vspace{-8mm}
\caption{\small Left: upper bound on the probability of a random ordering having the neutrality of the given ordering; Right: average wall-clock execution time of detection method}
\label{fig:detection}
\end{figure}
\begin{figure*}[bhpt]
\begin{minipage}{0.22\linewidth}
\includegraphics[width=\linewidth]{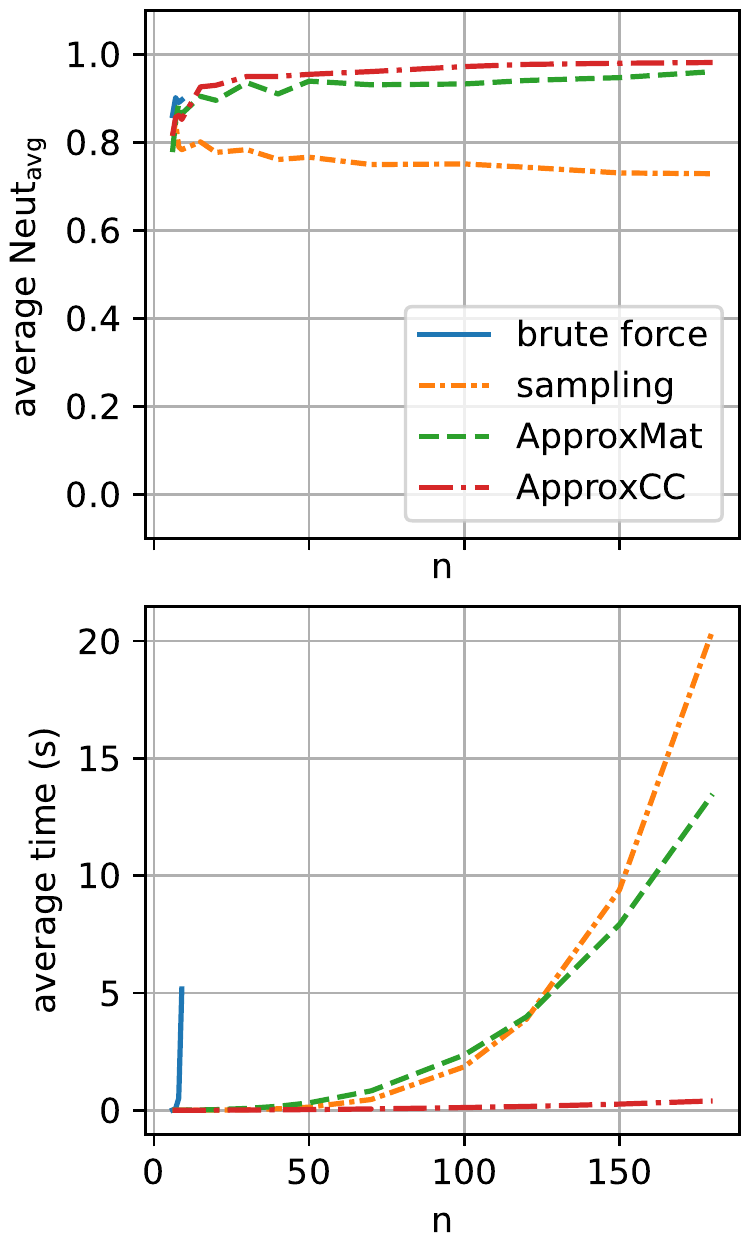}
\Description{A plot of neutrality and execution time. The neutrality of ApproxMat and ApproxCC converges to near 1 and that of the sampling baseline converges to near 0.75. The execution time is fastest for ApproxCC and slowest for the brute force method.}
\vspace{-8mm}
\caption{\small Neutrality and exec. time on semi-synthetic data ($\AggAvg$ aggregation)}
\label{fig:avgagg}
\end{minipage}
\hfill
\begin{minipage}{0.22\linewidth}
\includegraphics[width=\linewidth]{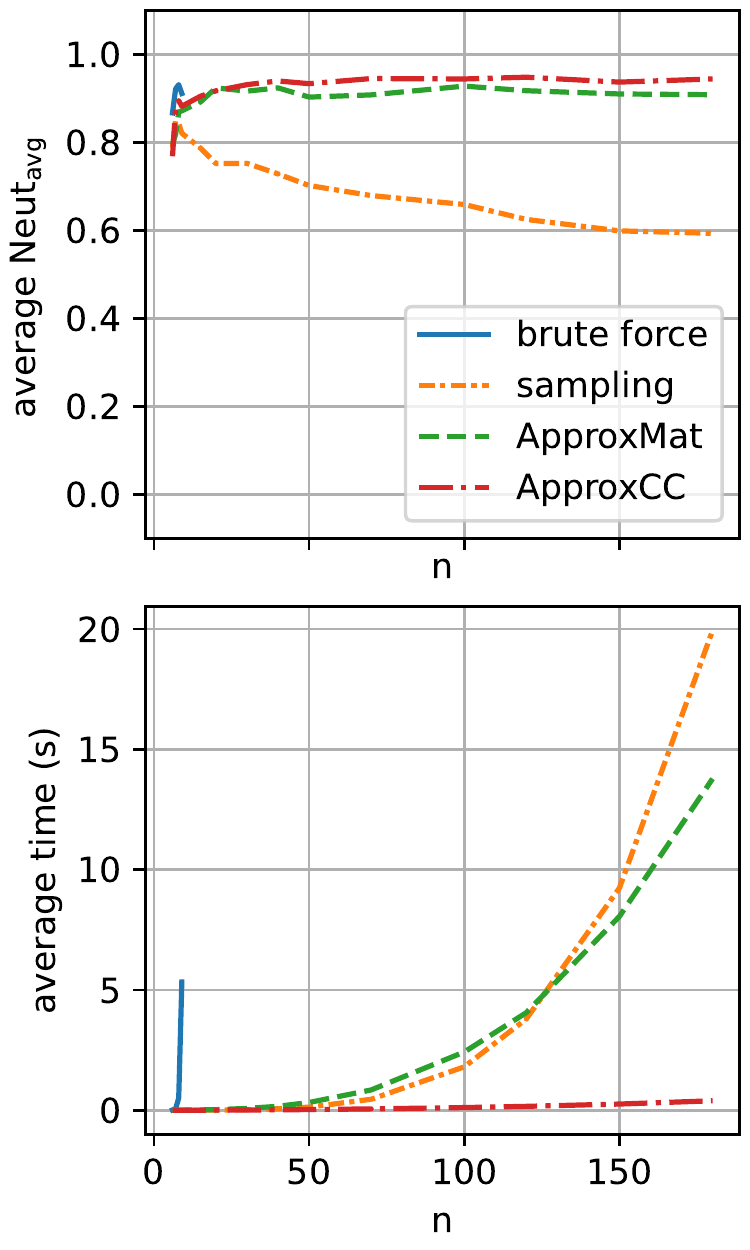}
\Description{A plot of neutrality and execution time. The neutrality of ApproxMat and ApproxCC converges to near 0.95 and that of the sampling baseline converges to near 0.6. The execution time is fastest for ApproxCC and slowest for the brute force method.}
\vspace{-8mm}
\caption{\small Neutrality and exec. time on synthetic data ($\AggAvg$ aggregation)}
\label{fig:avgagg2}
\end{minipage}
\hfill
\begin{minipage}{0.22\linewidth}
\includegraphics[width=\linewidth]{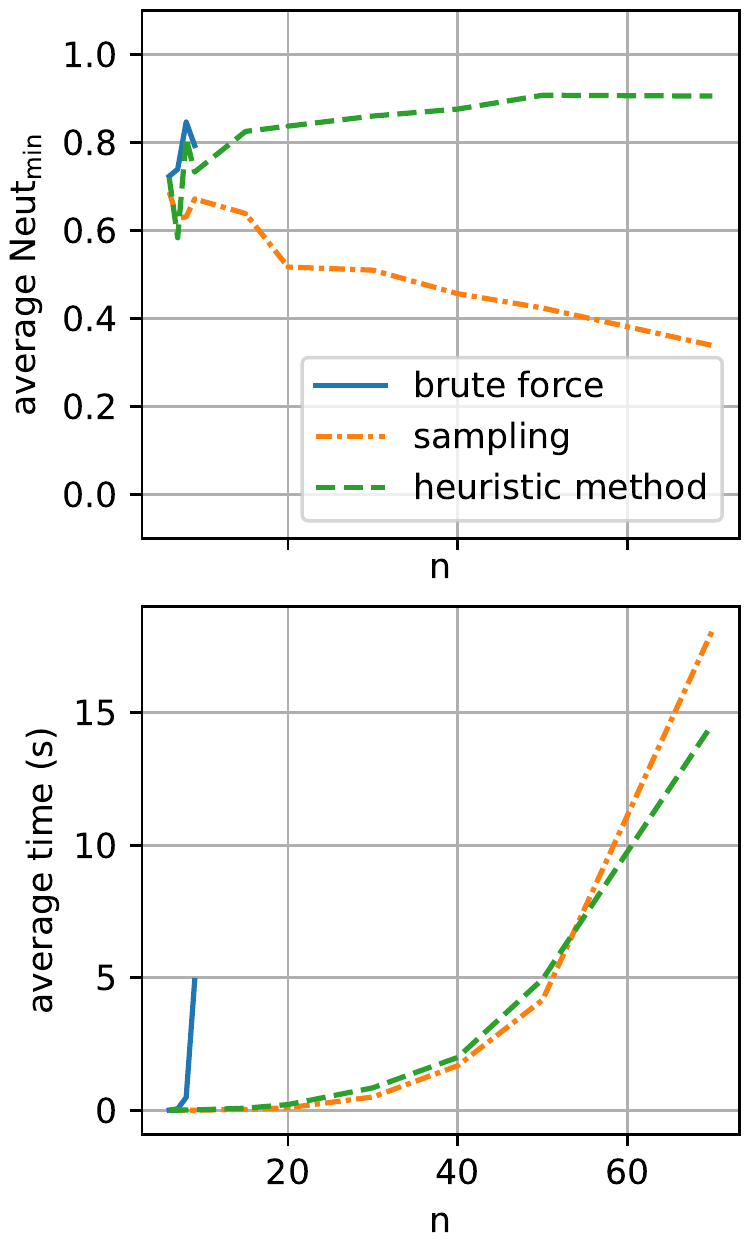}
\Description{A plot of neutrality and execution time. The neutrality of the heuristic method converges to near 0.9 and that of the sampling baseline steadily falls past 0.4. The execution time is fastest for the heuristic method and the sampling baseline and slowest for the brute force method.}
\vspace{-8mm}
\caption{\small Neutrality and exec. time on semi-synthetic data ($\AggMin$ aggregation)}
\label{fig:minagg}
\end{minipage}
\hfill
\begin{minipage}{0.22\linewidth}
\includegraphics[width=\linewidth]{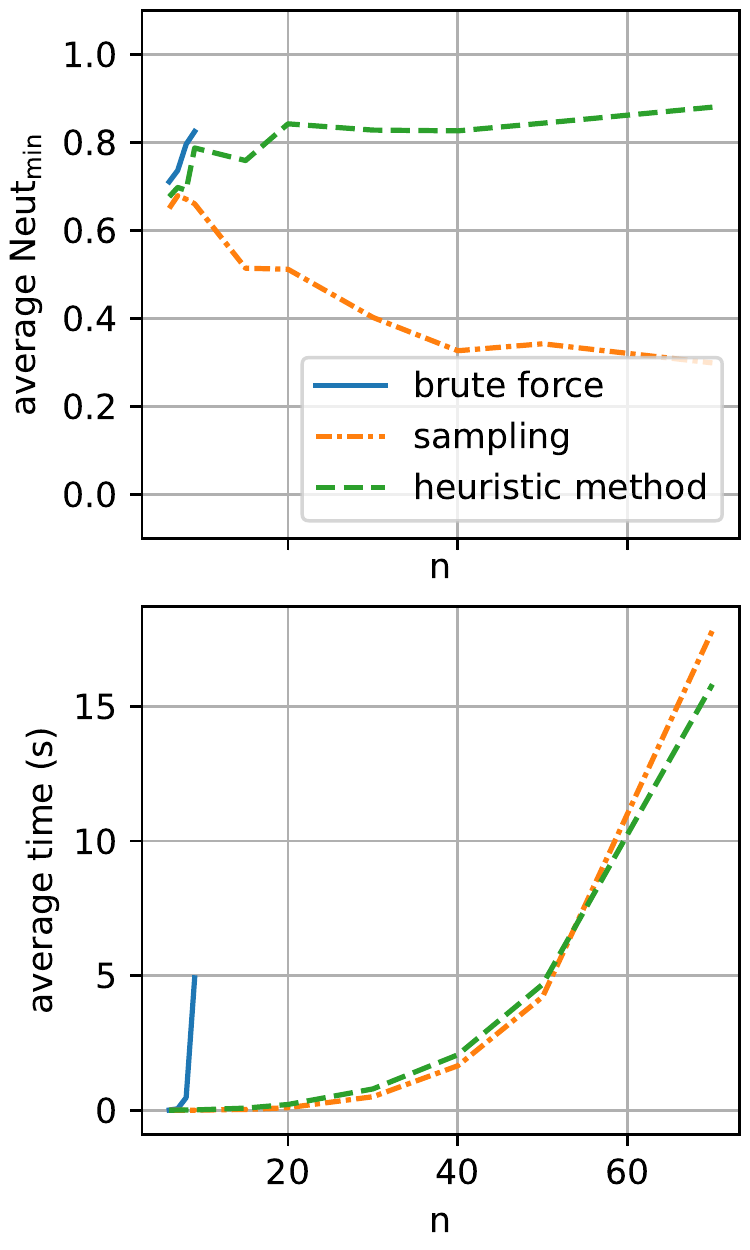}
\Description{A plot of neutrality and execution time. The neutrality of the heuristic method rises past 0.85 and that of the sampling baseline falls past 0.35. The execution time is fastest for the heuristic method and the sampling baseline and slowest for the brute force method.}
\vspace{-8mm}
\caption{\small Neutrality and exec. time on synthetic data ($\AggMin$ aggregation)}
\label{fig:minagg2}
\end{minipage}
\end{figure*}
\stitle{Detection}
To illustrate the process of detecting cherry-picked orderings, we run our algorithm with varying values of $r$. Recall that $r$ represents the number of random permutations sampled for the statistical test. Thus, higher values of $r$ lead to more accurate testing but slower computation. The values of $r$ used in the experiments are $\set{100, 200, \dots, 2000}$. The data in \autoref{fig:detection} confirms that the probabilities converge as $r$ grows.

For a conclusive test, in order to detect evidence of cherry-picking in the real dataset, we ran the detection algorithm with a much larger value of $50000$ for $r$ to get the tightest bounds.

Under $\AggAvg$ aggregation, we did not discover any significant evidence of cherry-picking by either source.
Under $\AggMin$ aggregation, on the other hand, \emph{we found evidence of potential cherry-picking for both sources}.
For American Thinker, we found that the probability that a random ordering would have neutrality as far from the mean as that of the true ordering is bounded above by $46.49\%$.
Likewise, for The Federalist, we obtained an upper bound of $59.91\%$.
This does not prove that the orderings were cherry-picked, or if they were, give proof of malicious intent, but it does give evidence that cherry-picking may have occurred, especially in the case of American Thinker.

In both cases, the computed neutrality of the ordering was \emph{\underline{zero}} under $\AggMin$ aggregation. We computed the maximum possible neutrality via the brute force method for comparison: $0.67$ for American Thinker\footnotemark{} and $0.83$ for The Federalist.

\footnotetext{Our heuristic method also found an optimal solution --- in $0.02$ seconds, instead of 8 minutes.}

The following is an example of a headline pair with neutrality zero from American Thinker:
\begin{itemize}[leftmargin=*]
\item ``The Merriam-Webster's online dictionary redefines `female'{}''
\item ``Crayola has joined the woke brigade with a vengeance''
\end{itemize}
In this case, viewing the second headline may prime the viewer to consider Merriam-Webster's actions to be ``woke'', a term that is often used as a pejorative. If they had viewed the first headline individually, they would be more likely to form their own (potentially less biased) opinion on the story.

\stitle{Maximizing Neutrality}
We used the algorithms from \S\ref{sec:aggsum} and \S\ref{sec:aggmin} to find orderings with high neutrality for the semi-synthetic and synthetic data. For the semi-synthetic data, the results for $\AggAvg$ aggregation are shown in \autoref{fig:avgagg} and the results for $\AggMin$ aggregation in \autoref{fig:minagg}. For the synthetic data, the results for $\AggAvg$ aggregation are shown in \autoref{fig:avgagg2} and the results for $\AggMin$ aggregation in \autoref{fig:minagg2}.

The values of $n$ used for testing are based on the sizes expected of real-life datasets. It would be unlikely for a reader to view a contiguous list of over 200 news headlines. Furthermore, given the computational complexity of the problem, few algorithms would be able to perform well far beyond that point. The values of $n$ used in our experiments are $\set{6, 7, 8, 9, 15, 20, 30, 40, 50, 70, 100, 120, 150, 180}$.
For $\AggMin$ aggregation, we stop at $n=70$.

We also tested a ``sampling'' baseline that computes the neutrality of many random orderings and selects the best one. To enable a fair comparison, under $\AggAvg$ aggregation, we set it to run for approximately the amount of time that \Alg{Mat} takes, and under $\AggMin$ aggregation, the amount of time that the heuristic method takes.
We can see that our algorithms perform significantly better than the sampling baseline. Next, the execution times confirm that \Alg{CC} is significantly faster than \Alg{Mat} ($O\bigl(n^3\bigr)$ vs.\ $O\bigl(n^4\bigr)$). Finally, we can see just how slow the brute-force method is --- it is infeasible to run it for $n > 9$ in the experiments.

In addition to confirming the theoretical time complexities, we learn from the results that \Alg{CC} performs slightly better than \Alg{Mat}. It is unclear why this is the case, but it seems to consistently compute slightly better orderings. We can also see that our algorithms are at or near optimal for small $n$. We cannot make any conclusions about their optimality for larger $n$ since we are unable to compute the solution by brute force for larger $n$, but we expect that they are near optimal.

The experimental results also show that our algorithms are robust to changes in the data distribution. Neither the neutrality or execution time changes significantly as a result of the change in distribution (whereas the sampling baseline performs much worse on the synthetic data).

\stitle{Early Stopping}
\begin{figure}[bhpt]
\begin{minipage}{0.47\linewidth}
\includegraphics[width=\linewidth]{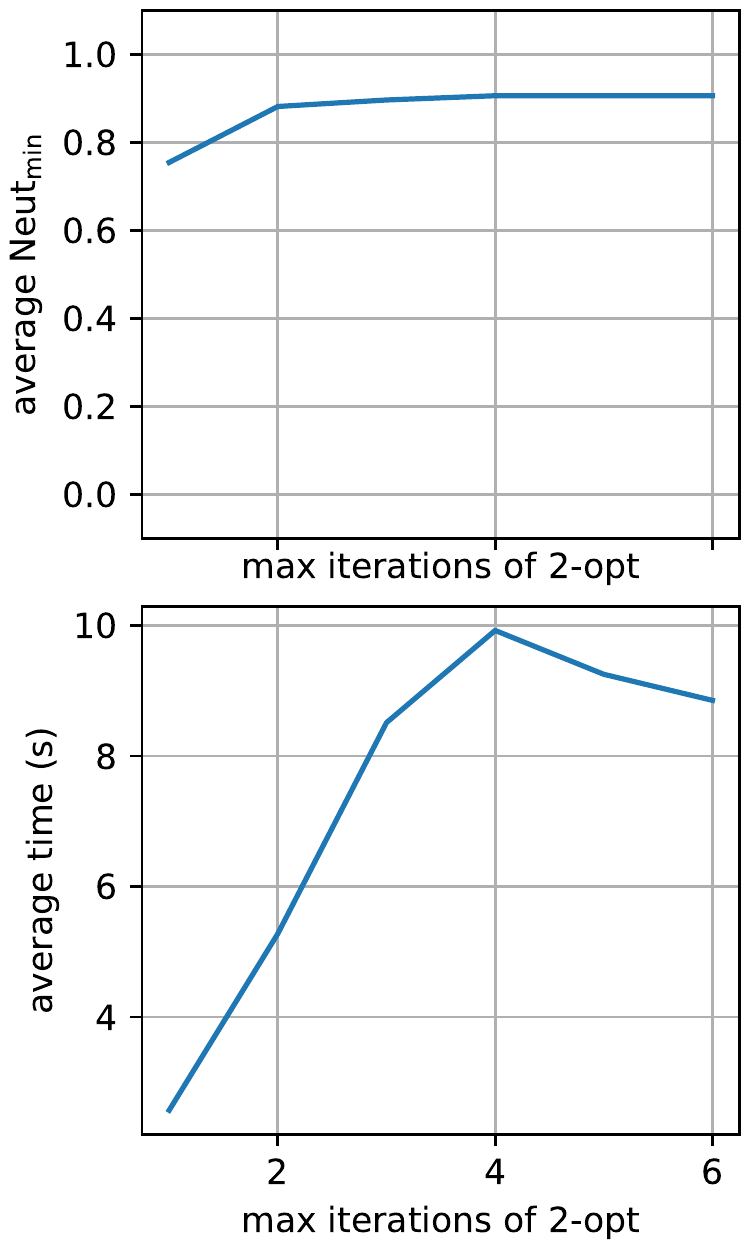}
\Description{A plot of neutrality and execution time. The neutrality converges near 0.9. The execution time is roughly linear until t=3.}
\vspace{-8mm}
\caption{\small Early stopping in 2-opt on semi-synthetic data}
\label{fig:early_stopping}
\end{minipage}
\hfill
\begin{minipage}{0.47\linewidth}
\includegraphics[width=\linewidth]{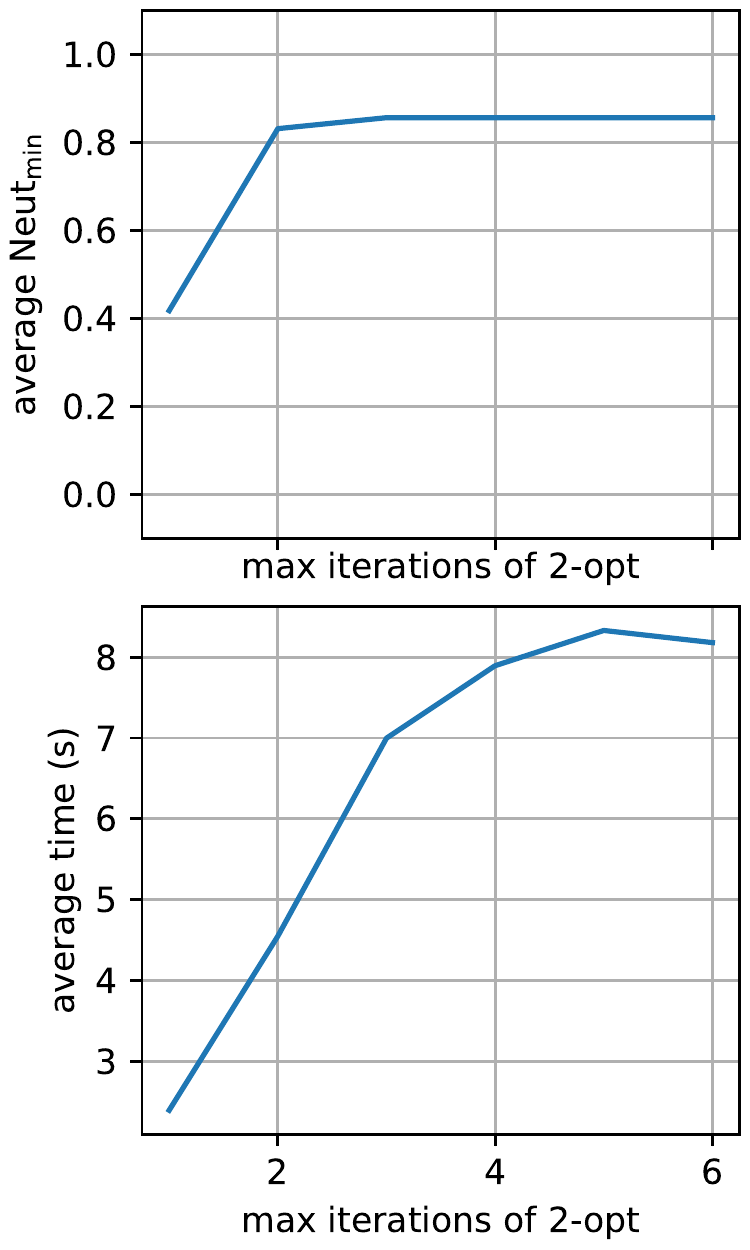}
\Description{A plot of neutrality and execution time. The neutrality converges near 0.85. The execution time is roughly linear until t=3.}
\vspace{-8mm}
\caption{\small Early stopping in 2-opt on synthetic data}
\label{fig:early_stopping2}
\end{minipage}
\end{figure}
Finally, we measured the effect of early stopping on the heuristic method. As mentioned earlier, the 2-opt subroutine has the ``anytime property'': it can be terminated at any time and still return a solution. For the other experiments, we simply ran 2-opt until it converged, but for this experiment, we ran it for a specified number of iterations (or terminated it if it converged early).

Using a fixed value of $n = 60$, for $1 \leq t \leq 6$, we ran the heuristic method but always terminated the 2-opt subroutine after at most $t$ iterations. The results are shown in Figures~\ref{fig:early_stopping}~and~\ref{fig:early_stopping2}.
The neutrality increases as the number of allowed iterations increases, but plateaus after about 3 or 4 iterations. However, the neutrality is still relatively high after just 2 iterations, with the benefit of a significant reduction in computation time.
\tightsubsection{Existence of Priming (User Study)}\label{sec:exp:userstudy}
The occurrence of priming has been well studied and has been analyzed in many related settings, including news consumption~\citep{Baumgartner12}, social networks~\citep{Agadjanian22}, job interviews~\citep{Simonsohn13}, crowdsourcing~\citep{Draws21}, and annotation~\citep{Damessie18}.
To verify that priming does indeed occur in our setting of viewing news headlines, we ran a user study (n=59).

We presented a test group and a control group (formed via convenience sampling) with a set of 9 fictional news headlines, and afterwards, we asked them their opinions of several people involved in the stories (``very negative'', ``negative'', ``positive'', or ``very positive'').
The full set of headlines and images of the survey interface can be found in the appendix.
The test group had the following pair of headlines placed next to each other in the ordering ($|s_j - s_i| = 1$); the control group had them separated ($|s_j - s_i| = 5$).
\begin{itemize}[leftmargin=*]
    \item ``City's high school graduation rates at lowest in decades''
    \item ``High school principal celebrates 10 years''
\end{itemize}
When surveyed at the end, after removing 6 responses that failed attention checks, 39\% of the participants in the test group had formed a negative impression of the principal, compared to 16\% in the control group.
This difference is statistically significant (Boschloo's exact test, $\mathbf{p=0.0337}$).

\balance

\tightsection{Related Work}\label{sec:related}
To the best of our knowledge, this paper is the first to study the effect of bias in news ordering. However, there are several areas of related work that we discuss next.

\stitle{Media Bias}
Neutrality in the \emph{ordering} of news headlines is the focus of this paper.
This is only one aspect of media bias, a large socio-technical problem with many dimensions~\citep{hamborg2019automated}.
Among many other facets, a well-recognized component of media bias is selection bias~\citep{hocke1996determining, bourgeois2018selection, kozman2022selection, lazaridou2016identifying}.
In general, selection bias happens when the data selection process does not involve proper randomization~\citep{winship1992models, heckman1990varieties, shahbazi2022survey}.
A study by \citet{bourgeois2018selection} uses the predictability of the news coverage to measure selection bias.

Diversifying search results~\citep{agrawal2009diversifying, drosou2017diversity} has been considered in efforts to reduce media bias~\citep{tintarev2018same, lunardi2019representing, giannopoulos2015algorithms, kitchens2020understanding}.
In particular, content spread in online social networks is affected by social bubbles and echo chambers~\citep{vaccari2021outside, nikolov2015measuring, dubois2018echo, cinelli2021echo, bruns2019filter}, which significantly bias the spread of information.
Diversifying news recommendations~\citep{kitchens2020understanding, tintarev2018same, gharahighehi2021diversification, lunardi2019representing, desarkar2014diversification, bauer2021diversification} has been an effective technique for breaking echo chambers.

\stitle{Computational Fact Checking}
Whereas our work emphasizes the importance of the ordering of news stories, prior work in the literature focuses on the veracity of the content of said stories~\citep{guo2022survey, zhou2020survey, NakovCHAEBPSM21}. There have been remarkable advancements in fake news detection in the past decade~\citep{cohen2011computational, wu2014toward, hassan2017toward, hassan2017claimbuster, asudeh2020detecting, hassan2015detecting, wu2017computational, hassan2014data}.
Early fake news detection efforts include manual methods based on expert domain knowledge and crowdsourcing~\citep{hassan2017toward, hassan2015detecting}.
Computational fact checking has since emerged, enabling automatic evaluation of claims~\citep{NakovCHAEBPSM21, guo2022survey, zhou2020survey}. 
These techniques heavily rely on natural language processing~\citep{li2007danalix, li2006constructing}, information retrieval~\citep{doan2012principles}, and graph theory~\citep{cohen2011computational}.
Related work includes knowledge extraction from different data sources~\citep{pawar2017relation, dong2014knowledge, grishman2015information}, data integration~\citep{steorts2016bayesian, magdy2010web, altowim2014progressive}, and credibility evaluation~\citep{esteves2018belittling, dong2015knowledge}.
A bulk of the recent techniques used in fake news detection are based in supervised learning~\citep{reis2019supervised, perez2017automatic, kaliyar2020fndnet}.
An increasing number of approaches are putting emphasis on the role of structured data~\citep{sumpter2021preserving, anadiotis2021empowering, saeed2021factchecking, chai2021tfv, asudeh2021perturbation, trummer2021webchecker}, as reflected in a special issue of the Data Engineering Bulletin~\citep{dbe21}.

\stitle{Fair Ranking}
Ranking news stories has been studied in the literature~\citep{del2005ranking, mccreadie2010news, tatar2014popularity}, but to the best of our knowledge, none of the existing work considers bias and neutrality in news ordering.
Fair ranking is a recent line of work that studies ordering a set of items or individuals to satisfy some fairness constraints~\citep{asudeh2019designing, biega2018equity, singh2018fairness, pitoura2021fairness}.
At a high level, existing work is divided into score-based ranking~\citep{zehlike2022fairness1, asudeh2019designing, asudeh2020fairly, asudeh2018obtaining} and learning-to-rank and recommender systems~\citep{zehlike2022fairness, gao2020toward, geyik2019fairness, beutel2019fairness}.
Despite the similarity in name, none of the existing work in this area can map onto our formulation of news ordering neutrality and is thus not useful in solving the problem proposed in this paper.

\stitle{Traveling Salesman Problem}
The (cycle) maximum traveling salesman problem has been studied since at least 1979~\citep{fisher_analysis_1979}. A survey on the maximum traveling salesman problem is given by \citet{barvinok_maximum_2007}, and the current state-of-the-art solution for it has a constant factor of 4/5~\citep{10.1007/978-3-319-59250-3_15}. On the other hand, the path maximum traveling salesman problem, which we introduce in this paper to model $\AggAvg$ aggregation, is not present in the literature to the best of our knowledge.
\tightsection{Final Remarks}\label{sec:discussion}
As this paper is opening up a new line of research in the fight against misinformation, there are numerous directions to explore.

\stitle{Data Collection}
A major challenge when evaluating news ordering neutrality is the collection of labels for each pair of news stories for constructing the POP function. One straightforward way is to survey the perceptions of the audience itself. However, it is not always possible to gain access to the audience's beliefs. 
The two main alternatives are crowdsourcing the labeling or having a domain expert provide the labels. The former is easy to scale but may result in inaccurate labels, while the latter results in accurate labels but is difficult to scale to large datasets.
One promising potential approach to alleviate the difficulties of the data collection process is to train large language models to classify pairs of news stories.

\stitle{Introducing Utility}
While maximizing neutrality is important, a corporation's main goal is to maximize profits. It would be an interesting research direction to introduce a notion of utility and attempt simultaneous maximization of neutrality and utility.

\stitle{Maximizing Neutrality}
In this work, we restricted ourselves to a simple decay function to represent the problem in the form of a graph. One natural extension is to study the problem with more complex decay functions.
For example, in layouts with multiple pages, it would be plausible to have a decay function with value 1 if two headlines are on the same page and 0 otherwise.

Another challenge is to try to find adversarial examples for the proposed algorithms. While we have proved approximation guarantees, we have not shown that they are tight. If adversarial examples are found, these bounds will be shown to be tight.

Finally, most of the algorithms for maximizing neutrality do not make any assumptions on the distribution of the data.
It would be interesting to see if better guarantees or algorithms can be discovered for specific data distributions.


\begin{acks}
This work has been supported in parts by the UIC University Fellowship, the \grantsponsor{nsf}{National Science Foundation}{https://www.nsf.gov/} (Grant No.~\grantnum{nsf}{2107290}), the \grantsponsor{anr}{ANR}{https://anr.fr/en/} project ATTENTION (\grantnum{anr}{ANR-21-CE23-0037}), and the Google Research Scholar Award.
We would like to thank the reviewers for their constructive feedback and \textbf{Dr.~H.\,V.~Jagadish} for his invaluable contribution in problem identification.
\end{acks}

\clearpage
\bibliographystyle{ACM-Reference-Format}
\bibliography{main}

\clearpage
\appendix
\tightsection{Directionality}
If we model the POP function as being sensitive to the order of the two stories taken as input, all but one of the results in this paper still hold.
\Alg{3CC} cannot be adapted for the directed setting because it relies on finding a max-weight 3-cycle cover and doing this is NP-hard in a directed graph~\citep[GT13]{gareyandjohnson}.
Fortunately, this algorithm is of theoretical interest only, and all other algorithms can be easily adapted, as-is, for the directed setting.
\tightsection{Pseudocode}
The pseudocode for approximately solving $\MiscProb{PathMaxTSP}$ via iterated matching is shown below.

\begin{algorithmic}[1]
\Procedure{ApproxMat}{$G=(V,E)$}
    \State $k \gets 0$;
     $G_k \gets G$;
     $size \gets n$
    \While{$size > 1$}
        \State Compute a max-weight matching $M$ in $G_k$.
        \State Construct a graph $G_{k+1}=(V_{k+1},E_{k+1})$ as follows.
        \ForAll{$((u_1,\dots,u_a), (v_1,\dots,v_b)) \in M$}
            \LongState{3}{Consider the edge $e$ in $E$ with highest weight from the set $\set{(u_1,v_1), (u_1,v_b), (u_a,v_1), (u_a,v_b)}$.}
            \State \textbf{if} $e = (u_1,v_1)$\textbf{:} Add $(u_a, \dots, u_1, v_1, \dots, v_b)$ to $V_{k+1}$.
            \State \textbf{if} $e = (u_1,v_b)$\textbf{:} Add $(u_a, \dots, u_1, v_b, \dots, v_1)$ to $V_{k+1}$.
            \State \textbf{if} $e = (u_a,v_1)$\textbf{:} Add $(u_1, \dots, u_a, v_1, \dots, v_b)$ to $V_{k+1}$.
            \State \textbf{if} $e = (u_a,v_b)$\textbf{:} Add $(u_1, \dots, u_a, v_b, \dots, v_1)$ to $V_{k+1}$.
        \EndFor
        \State If $|V_k|$ is odd, add the remaining vertex in $V_k$ to $V_{k+1}$.
        \ForAll{pairs $(u_1,\dots,u_a), (v_1,\dots,v_b)$ in $V_{k+1}$}
            \LongState{3}{Let $w$ be the weight of the edge in $E$ with highest weight from the set $\{\,(u_1,v_1), (u_1,v_b), (u_a,v_1), (u_a,v_b)\,\}$.}
            \LongState{3}{Add the edge $((u_1,\dots,u_a), (v_1,\dots,v_b))$ to $E_{k+1}$ with weight $w$.}
        \EndFor
        \State $size \gets |V_{k+1}|$;
        $k \gets k+1$
    \EndWhile
    \LongState{1}{\textbf{return} the Hamiltonian path $v_1, \dots, v_n$ in $G$ where $(v_1, \dots, v_n)$ is the sole vertex in $V_k$.}
\EndProcedure
\end{algorithmic}

The pseudocode for approximately solving $\MiscProb{PathMaxTSP}$ via 3-cycle cover is shown below.

\begin{algorithmic}[1]
\Procedure{Approx3CC}{$G = (V,E)$}
    \State Construct a graph $G' = (V',E')$ as follows.
    \LongState{1}{Let $w_{\max}$ be the weight of the edge with max weight in $E$.}
    \LongState{1}{For each vertex $v \in V$, add a complete bipartite graph $G_v = K_{n-1, n-3}$ to $G'$ with each edge having weight $w_{\max}$.}
    \LongState{1}{Let $G_{v,R}$ denote the side of the bipartition with $n-1$ vertices.}
    \LongState{1}{For each edge $(u,v) \in E$ with weight $w$, add an edge with weight $w$ between a pair of vertices from $G_{u,R}$ and $G_{v,R}$ such that for all $x \in V$, each vertex in $G_{x,R}$ has degree exactly $n-2$.}
    \LongState{1}{Compute a max-weight matching in $G'$. Let $C$ be the corresponding 3-cycle cover in $G$.}
    \State For each cycle $c \in C$, remove the min-weight edge.
    \State Arbitrarily join the paths together.
    \State \textbf{return} the resulting Hamiltonian path.
\EndProcedure
\end{algorithmic}
\tightsection{Binary Average Aggregation}\label{sec:bin_avg}
If we add the condition that the POP function, $C$, takes values in $\set{0,1}$, we can prove additional results. This can be interpreted as assuming that any pair of stories has either zero risk of giving rise to opinion priming or is absolutely certain to do so.

\begin{definition}[$\MiscProb{PathMaxTSP(0,1)}$]
Given a graph $G$ such that all edges have binary weight, \emph{$\MiscProb{PathMaxTSP(0,1)}$} is the problem of finding a Hamiltonian path with maximum total weight.
\end{definition}
We also introduce the cycle variant, a problem not yet addressed in the literature:
\begin{definition}[$\MiscProb{MaxTSP(0,1)}$]
Given a graph $G$ such that all edges have binary weight, \emph{$\MiscProb{MaxTSP(0,1)}$} is the problem of finding a Hamiltonian cycle with maximum total weight.
\end{definition}

\begin{lemma}\label{lem:1}
$\MiscProb{PathMaxTSP(0,1)}$ is NP-hard.
\end{lemma}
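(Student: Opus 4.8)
The plan is to reduce the classical, NP-complete problem $\MiscProb{HamPath}$ on arbitrary (not necessarily complete) graphs to $\MiscProb{PathMaxTSP(0,1)}$, by showing that any polynomial-time algorithm for the latter would decide the former. Since $\MiscProb{HamPath}$ is NP-complete, this establishes NP-hardness of $\MiscProb{PathMaxTSP(0,1)}$, and as a special case it also reconfirms the hardness of the general $\MiscProb{PathMaxTSP}$.

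First I would take an arbitrary instance $G = (V,E)$ of $\MiscProb{HamPath}$ with $|V| = n$ and construct a complete, binary-weighted graph $G' = (V,E')$ on the same vertex set: each pair of vertices joined by an edge of $G$ receives weight $1$, and each pair that is a non-edge of $G$ receives weight $0$. This padding of missing edges with weight-$0$ edges is precisely what reconciles the arbitrary input graph with the paper's standing assumption that graphs are simple, complete, and nonnegatively weighted, and it runs in time $O(n^2)$, hence polynomial.

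Next I would establish the key equivalence. Any Hamiltonian path on $n$ vertices uses exactly $n-1$ edges, so under binary weights the total weight of a Hamiltonian path in $G'$ is at most $n-1$, with equality if and only if every consecutive pair in the corresponding vertex ordering is joined by a weight-$1$ edge, i.e.\ by an edge actually present in $G$. Therefore $G'$ admits a Hamiltonian path of weight exactly $n-1$ if and only if the same vertex ordering is a Hamiltonian path of $G$. Consequently the optimal value of $\MiscProb{PathMaxTSP(0,1)}$ on $G'$ equals $n-1$ precisely when $G$ has a Hamiltonian path.

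Finally, I would conclude: if $\MiscProb{PathMaxTSP(0,1)}$ were solvable in polynomial time, then running the solver on $G'$ and comparing the returned optimal weight against the threshold $n-1$ would decide $\MiscProb{HamPath}$ in polynomial time, forcing $\mathrm{P} = \mathrm{NP}$. I do not expect a genuine technical obstacle here; the only points that require care are respecting the completeness convention of the graph model (handled cleanly by the weight-$0$ padding) and verifying that the max-weight threshold $n-1$ exactly separates the ``yes'' and ``no'' instances of $\MiscProb{HamPath}$.
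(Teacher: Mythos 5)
Your proposal is correct and uses essentially the same idea as the paper: the correspondence between a complete binary-weighted graph and the unweighted graph formed by its weight-$1$ edges, with the threshold $n-1$ separating the instances of $\MiscProb{HamPath}$. The only difference is presentational --- you state the reduction in the standard direction (from $\MiscProb{HamPath}$ to $\MiscProb{PathMaxTSP(0,1)}$, padding non-edges with weight-$0$ edges), while the paper writes the same correspondence starting from the weighted graph and argues by contradiction; your phrasing is arguably the cleaner one.
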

\begin{proof}
Consider a graph $G = (V,E)$ with the property that all edges have binary weight. Then, define $G' = (V,E')$ to be the unweighted graph with the same vertex set and an edge between any two vertices $u$ and $v$ if and only if $(u,v)$ has weight 1 in $E$.

If a solution to $\MiscProb{PathMaxTSP(0,1)}$ in $G$ has weight $n-1$, then there must exist a Hamiltonian path in $G'$. If a solution to $\MiscProb{PathMaxTSP(0,1)}$ in $G$ has weight less than $n-1$, then there cannot exist a Hamiltonian path in $G'$. Given a solution to $\MiscProb{PathMaxTSP(0,1)}$, we can decide $\MiscProb{HamPath}$ in constant time.

Suppose for contradiction that $\MiscProb{PathMaxTSP(0,1)}$ is not NP-hard. Then, by our previous claim, $\MiscProb{HamPath}$ is not NP-hard. This is a contradiction, since $\MiscProb{HamPath}$ is known to be NP-hard. Therefore, $\MiscProb{PathMaxTSP(0,1)}$ must be NP-hard.
\end{proof}

\begin{proof}[Proof of \autoref{thm:pathmaxtspnphard}]
$\MiscProb{Path}$ $\MiscProb{MaxTSP(0,1)}$ is a special case of $\MiscProb{PathMaxTSP}$. Thus, $\MiscProb{PathMaxTSP}$ is also NP-hard.
\end{proof}

Since $\MiscProb{PathMaxTSP(0,1)}$ is NP-hard, we seek an approximation algorithm for the problem. The simplest approach is to reduce it to another problem that already has a solution. The cycle variant of this problem, $\MiscProb{MaxTSP(0,1)}$, which we introduced in this paper, has not yet been studied, but its generalization, $\MiscProb{MaxTSP}$, has been addressed in the literature; the current state-of-the-art algorithm has an approximation factor of $4/5$~\citep{10.1007/978-3-319-59250-3_15}.

\begin{theorem}
\label{thm:PathMaxTSP(0,1)}
Given an $\alpha$-approximation for $\MiscProb{MaxTSP(0,1)}$, we can compute an $\alpha$-approximation for $\MiscProb{PathMaxTSP(0,1)}$ in $O(n)$ time.
\end{theorem}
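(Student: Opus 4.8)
The plan is to reduce $\MiscProb{PathMaxTSP(0,1)}$ to $\MiscProb{MaxTSP(0,1)}$ by the standard device of adding a single ``dummy'' vertex that closes any Hamiltonian path into a Hamiltonian cycle, while choosing the weights of the new edges so that the correspondence is \emph{weight-preserving}. Concretely, given $G = (V,E)$ on $n$ vertices with binary edge weights, I would construct $G' = (V \cup \{v_0\}, E')$ by keeping every original edge with its original weight and adding an edge of weight $0$ from the new vertex $v_0$ to each vertex of $V$. Since all new weights lie in $\{0,1\}$ and $v_0$ is joined to every other vertex, $G'$ is again a complete graph with binary edge weights, hence a legal instance of $\MiscProb{MaxTSP(0,1)}$. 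This construction touches one new vertex and $n$ new edges, so it costs $O(n)$ time.

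Next I would establish a weight-preserving bijection between Hamiltonian paths of $G$ and Hamiltonian cycles of $G'$. Every Hamiltonian cycle of $G'$ must visit $v_0$ and uses exactly its two incident edges, say to endpoints $a$ and $b$; deleting $v_0$ leaves a Hamiltonian path of $G$ from $a$ to $b$, and because both deleted edges have weight $0$ this path has exactly the same total weight as the cycle. Conversely, any Hamiltonian path of $G$ with endpoints $a,b$ extends to a Hamiltonian cycle of $G'$ by adding the weight-$0$ edges $(v_0,a)$ and $(v_0,b)$, again without changing the weight. Consequently the optimal cycle weight in $G'$ equals the optimal path weight in $G$; call this common value $\mathrm{OPT}$.

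Finally I would run the assumed $\alpha$-approximation for $\MiscProb{MaxTSP(0,1)}$ on $G'$ to obtain a Hamiltonian cycle $C$ with $w(C) \ge \alpha\,\mathrm{OPT}$, then delete $v_0$ to read off a Hamiltonian path $P$ of $G$. By the weight-preserving correspondence $w(P) = w(C) \ge \alpha\,\mathrm{OPT}$, so $P$ is an $\alpha$-approximate solution to $\MiscProb{PathMaxTSP(0,1)}$; forming $G'$ and extracting $P$ each take $O(n)$ time, giving the claimed bound.

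The step to watch is the choice of edge weight for the dummy vertex. It is tempting to use weight-$1$ edges (so that $G'$ resembles the classical Hamiltonicity reduction), but then each cycle would carry a fixed additive surplus of $2$ relative to its underlying path, and an $\alpha$-approximate cycle would only guarantee $w(P) \ge \alpha\,\mathrm{OPT} - 2(1-\alpha)$, degrading the multiplicative ratio whenever $\alpha < 1$. Using weight-$0$ edges forces the additive offset to be exactly $0$, which is precisely what lets the multiplicative factor $\alpha$ pass through unchanged; the remainder is routine bookkeeping.
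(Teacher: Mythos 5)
Your proof is correct, but it takes a genuinely different route from the paper's. The paper does not modify the instance at all: it runs the $\alpha$-approximation for $\MiscProb{MaxTSP(0,1)}$ on $G$ itself and then deletes the minimum-weight edge of the returned cycle, with a case analysis that is specific to binary weights --- if that edge has weight $1$ the cycle is all-ones and the resulting path is outright optimal, and otherwise the deleted edge has weight $0$ so the path weight equals the cycle weight, after which $P/P^* = C/P^* \geq C/C^* \geq \alpha$ follows because any path can be closed into a cycle of no smaller weight. Your dummy-vertex construction instead builds a weight-preserving correspondence between Hamiltonian paths of $G$ and Hamiltonian cycles of $G'$, which buys you something the paper's argument does not: since nothing in your reduction uses binary weights, the same construction shows that an $\alpha$-approximation for general $\MiscProb{MaxTSP}$ yields an $\alpha$-approximation for $\MiscProb{PathMaxTSP}$ on any nonnegatively weighted complete graph, whereas the paper's edge-deletion argument would only give roughly $(1 - 1/n)\alpha$ there. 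What the paper's version buys in exchange is that the oracle is invoked on the original $n$-vertex instance and the only post-processing is an $O(n)$ scan; your version must materialize $G'$, which is $O(n)$ only under a representation that lets you append a vertex and $n$ zero-weight edges in place (a dense adjacency-matrix copy would cost $O(n^2)$). Your closing remark about why the dummy edges must have weight $0$ rather than $1$ is exactly the right point to flag. Both arguments are sound; yours is the more portable of the two.
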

\begin{proof}
Given a cycle that is an $\alpha$-approximation for $\MiscProb{MaxTSP(0,1)}$, remove the min-weight edge. If it has weight 1, the cycle has weight $n$ and the resulting path has weight $n-1$, which is the maximum weight possible for a path of length $n-1$, so it must be an optimal solution. Henceforth, assume that the min-weight edge has weight 0. Let $C$ be the weight of the cycle and $P=C$ the weight of the resulting path.

Let $C^*$ be the weight of an optimal solution to $\MiscProb{MaxTSP(0,1)}$. If we remove an edge, we have a path of weight at most $C^*$. If there was a path with greater weight, we could join the endpoints to form a cycle with weight greater than $C^*$, so the first path must be optimal. Let $P^*$ be its weight.
Then, we have
$\frac{P}{P^*} = \frac{C}{P^*} >= \frac{C}{C^*} >= \alpha$.

Thus, the path we have constructed gives an $\alpha$-approximation for $\MiscProb{PathMaxTSP(0,1)}$.

It takes $O(n)$ time to find and remove the min-weight edge, so we have constructed the approximation in $O(n)$ time.
\end{proof}
\begin{corollary}
There is a $4/5$-approximation algorithm for $\MiscProb{Path}$ $\MiscProb{MaxTSP(0,1)}$.
\end{corollary}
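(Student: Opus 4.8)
The plan is to obtain the result by composing two ingredients that are already available: the known $4/5$-approximation for the (cycle) maximum traveling salesman problem and \autoref{thm:PathMaxTSP(0,1)}, which lifts any approximation for the cycle variant to the path variant. Since both pieces are in hand, the proof is essentially a chaining argument rather than a new construction.

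First I would observe that $\MiscProb{MaxTSP(0,1)}$ is nothing but $\MiscProb{MaxTSP}$ restricted to instances whose edge weights lie in $\set{0,1}$. Because binary weights are a special case of nonnegative weights, the state-of-the-art algorithm of \citet{10.1007/978-3-319-59250-3_15}, which achieves a $4/5$-approximation on arbitrary nonnegative-weight instances of $\MiscProb{MaxTSP}$, may be run verbatim on any binary-weight instance. Its guarantee is stated relative to the weight of an optimal Hamiltonian cycle, and on a fixed instance that optimum is identical whether we view the instance as an instance of $\MiscProb{MaxTSP}$ or of $\MiscProb{MaxTSP(0,1)}$. Hence the same algorithm is a $4/5$-approximation for $\MiscProb{MaxTSP(0,1)}$.

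Second I would invoke \autoref{thm:PathMaxTSP(0,1)} with $\alpha = 4/5$. That theorem converts an $\alpha$-approximation for $\MiscProb{MaxTSP(0,1)}$ into an $\alpha$-approximation for $\MiscProb{PathMaxTSP(0,1)}$ at an additive $O(n)$ cost, by removing the min-weight edge of the returned cycle. Instantiating it with the $4/5$-approximate cycle algorithm from the previous step immediately yields the desired $4/5$-approximation for $\MiscProb{PathMaxTSP(0,1)}$. The only delicate point is the transfer of the ratio to the restricted input class, namely confirming that forbidding non-binary weights does not weaken the guarantee; this holds trivially because the $4/5$ bound compares the algorithm's output to the optimal cycle weight on the \emph{given} instance, and restricting the admissible inputs does not alter the value of that optimum. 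I do not expect any genuine obstacle: the corollary is a direct composition of the cited external result with the paper's own \autoref{thm:PathMaxTSP(0,1)}.
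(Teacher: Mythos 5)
Your proposal is correct and follows exactly the paper's own route: the $4/5$-approximation for $\MiscProb{MaxTSP}$ specializes to binary-weight instances, giving a $4/5$-approximation for $\MiscProb{MaxTSP(0,1)}$, which is then lifted to the path variant via \autoref{thm:PathMaxTSP(0,1)}. The only difference is that you spell out the (trivial but worth noting) point that restricting to binary weights preserves the approximation guarantee, which the paper leaves implicit.
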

\begin{proof}
There is a $4/5$-approximation algorithm for $\MiscProb{MaxTSP(0,1)}$. Thus, by \autoref{thm:PathMaxTSP(0,1)}, we can construct a $4/5$-approximation for $\MiscProb{PathMaxTSP(0,1)}$.
\end{proof}

\balance

\tightsection{Average Aggregation Proofs}\label{sec:avg_proofs}
\begin{proof}[Proof of Theorem~\ref{th:mtsp1}]
We will show that the first iteration of \Alg{Mat} gives us a $1/2$-approximation. All future iterations cannot worsen the approximation, so we do not have to consider their effects.
Let $e_1, e_2, \dots, e_{n-1}$ be the sequence of edges in some solution to an instance of $\MiscProb{PathMaxTSP}$. Let $W$ be the total weight of the path.
The following sets of edges are matchings in the graph:
\begin{equation*}
    \set{e_1, e_3, \dots},
    \set{e_2, e_4, \dots}
\end{equation*}
Consider the matching of greater total weight (breaking a tie arbitrarily). It must have weight at least $W/2$. Thus, the max-weight matching in the graph must have weight at least $W/2$. Finally, if we arbitrarily patch the edges together to form a Hamiltonian path, the resulting path also has weight at least $W/2$. This gives us a $1/2$-approximation.
\end{proof}

\begin{proof}[Proof of Theorem~\ref{th:mtsp2}]
We show that the first iteration of \Alg{CC} gives us a $1/2$-approximation. All future iterations cannot worsen the approximation, so we do not have to consider their effects.
Let $W$ be the weight of a max-weight Hamiltonian path in $G$.
First, note that a max-weight Hamiltonian cycle must have weight at least $W$. Next, a Hamiltonian cycle is a cycle cover with 1 cycle, so the max-weight cycle cover constructed in the algorithm must have weight at least $W$.

Each cycle, by definition, has at least 2 edges. If we remove the min-weight edge from each cycle, we are removing at most half of the weight of each cycle. The resulting set of paths has total weight at least $W/2$. Accordingly, the constructed Hamiltonian path has weight at least $W/2$. This gives us a $1/2$-approximation.
\end{proof}

\begin{proof}[Proof of Theorem~\ref{th:mtsp3}]
Assume $n \geq 3$ (it is trivial to solve $\NeutProb{\AggAvg}$ if $n < 3$).
Let $W$ be the weight of a max-weight Hamiltonian path in $G$.
First, note that a max-weight Hamiltonian cycle must have weight at least $W$. Next, a Hamiltonian cycle is a 3-cycle cover with 1 cycle, so the max-weight cycle cover constructed in \Alg{3CC} must have weight at least $W$.

Each cycle, by construction, has at least 3 edges. If we remove the min-weight edge from each cycle, we are removing at most one third of the weight of each cycle. The resulting set of paths has total weight at least $2W/3$. Accordingly, the constructed Hamiltonian path has weight at least $2W/3$. This gives us a $2/3$-approximation.
\end{proof}
\tightsection{User Study Headlines}
\begin{itemize}[leftmargin=*]
    \item ``Mayor accused of mishandling city funds''
    \item ``Library temporarily closing doors for renovations''
    \item ``Navy veteran receives Silver Star Medal''
    \item ``City's high school graduation rates at lowest in decades''
    \item ``Residents urged to stay in their homes as temperatures reach -20s''
    \item ``Local artist's work featured in popular downtown bar''
    \item ``New bus routes added to accommodate increase in commuters''
    \item ``Man charged with DUI after crashing into restaurant''
    \item ``High school principal celebrates 10 years''
\end{itemize}
\tightsection{User Study Interface}

\begin{figure}[bhpt]
\includegraphics[width=\linewidth]{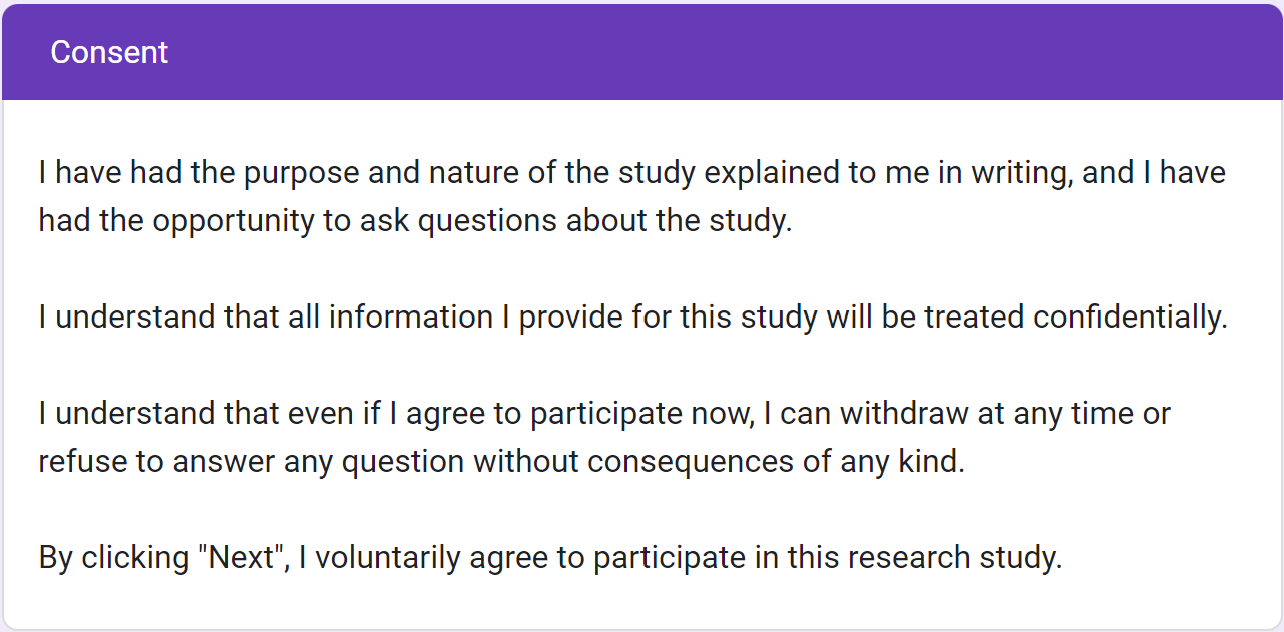}
\Description{A survey section asking the participant for consent.}
\vspace{-5mm}
\caption{\small Consent page of the user study}
\label{fig:user_study_1}
\end{figure}

\begin{figure}[bhpt]
\includegraphics[width=\linewidth]{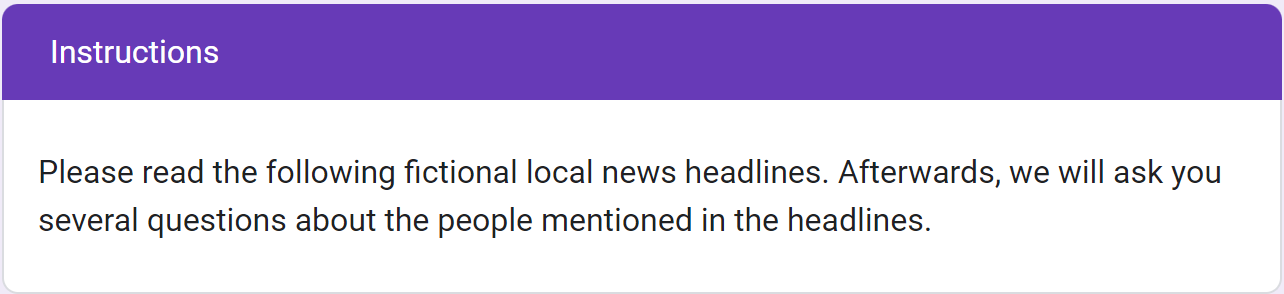}
\Description{A survey section providing instructions.}
\vspace{-5mm}
\caption{\small Instructions for the user study}
\label{fig:user_study_2}
\end{figure}

\begin{figure}[bhpt]
\includegraphics[width=\linewidth]{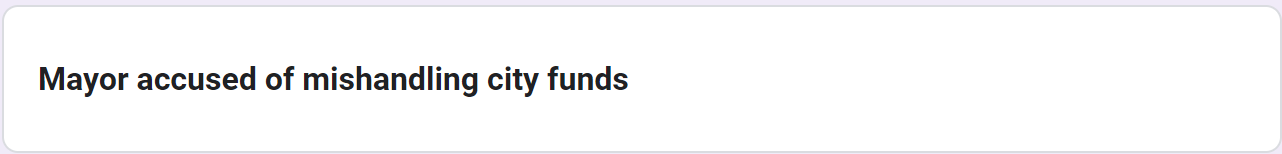}
\Description{A survey section containing a fictional news headline.}
\vspace{-5mm}
\caption{\small Example of a headline from the user study}
\label{fig:user_study_3}
\end{figure}

\begin{figure}[bhpt]
\includegraphics[width=\linewidth]{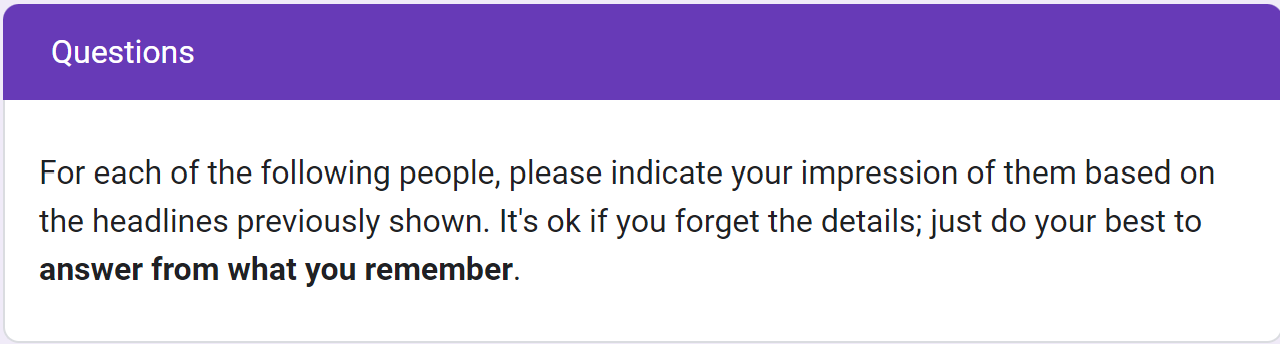}
\Description{A block of text in a survey giving instructions on how to answer questions.}
\vspace{-5mm}
\caption{\small Prompt from the user study}
\label{fig:user_study_4}
\end{figure}


\end{document}